%% file: main.tex
\documentclass[11pt]{article}
\usepackage[a4paper,margin=1in]{geometry}
\IfFileExists{glyphtounicode.tex}{\input{glyphtounicode}\pdfgentounicode=1}{}
\usepackage{cmap}
\usepackage[T1]{fontenc}
\usepackage[utf8]{inputenc}
\usepackage{lmodern}
\usepackage{microtype}
\usepackage{amsmath,amssymb,amsthm,mathtools}
\usepackage{enumitem}
\usepackage[numbers,sort&compress]{natbib}
\usepackage{hyperref}
\hypersetup{colorlinks=true,linkcolor=blue,citecolor=blue,urlcolor=blue,
  pdftitle={Biprofile Deviation Logic: Report-Replacement Frames and Audit Witnesses},
  pdfauthor={Faruk Alpay and Baris Basaran},
  pdfkeywords={coalitional modal logic, strategic social choice, strategy-proofness, report-deviation frames, canonical completeness, single-peaked preferences}}
\allowdisplaybreaks
\newtheorem{theorem}{Theorem}
\newtheorem{proposition}{Proposition}
\newtheorem{definition}{Definition}
\newtheorem{lemma}{Lemma}
\newtheorem{corollary}{Corollary}
\newtheorem{remark}{Remark}

\begin{document}

\begin{center}
{\small \textbf{Article}}\\[1.2em]
{\LARGE\bfseries Biprofile Deviation Logic: Report-Replacement Frames and Audit Witnesses}\\[1.0em]
Faruk Alpay\textsuperscript{1,*}\quad and\quad Baris Ba\c{s}aran\textsuperscript{1}\\[0.6em]
{\small \textsuperscript{1} Department of Computer Engineering, Bahcesehir University, Istanbul, Turkey;\newline
\texttt{faruk.alpay@bahcesehir.edu.tr} (F.A.); \texttt{baris.basaran@bahcesehir.edu.tr} (B.B.)}\\[0.4em]
{\small \textsuperscript{*} Correspondence: \texttt{alpay@lightcap.ai}. Institutional e-mail: \texttt{faruk.alpay@bahcesehir.edu.tr}.}
\end{center}

\vspace{0.5em}
\noindent\rule{\textwidth}{0.4pt}
\vspace{0.5em}

\input{body.tex}

\input{references_manual.tex}
\end{document}

%% file: body.tex
\begin{abstract}
Biprofile deviation logic models strategic social choice states as pairs \((R,P)\), where \(R\) is the true profile used for welfare comparisons and \(P\) is the submitted report profile used by the rule. Coalition modalities replace only the reports of the coalition, and their relations satisfy the fixed law \(E_C\circ E_D=E_{C\cup D}\). The paper proves soundness and completeness of \(H_{\mathrm{bp}}\) for the abstract frame class \(\mathsf{Dev}(N)\), with the reverse-composition midpoint displayed inside the canonical proof. It then separates abstract \(\mathsf{Dev}(N)\)-components from genuine report-coordinate products by coordinate separation. On the social-choice side, the classical facts supply the source notions; the paper-specific contribution is the audit layer for representation changes: typed manipulation witnesses, a boundary-row theorem for off-domain extensions, and a factor-closure criterion for public deletions. The ancillary material contains the input formats, an executable certificate checker, Lean and Alloy companions for the finite relational lemmas and update patterns, recorded run logs, and checksums.
\end{abstract}

\noindent\textbf{Keywords:} coalitional modal logic; strategic social choice; strategy-proofness; report-deviation frames; S5 completeness; canonical completeness; semilattice modal logic; product modal logic; restricted domains; single-peaked preferences.

\section{Introduction}

Strategic social choice compares two profiles. A true profile determines welfare comparisons, while a submitted report profile determines the chosen outcome. Biprofile deviation logic makes this split part of the Kripke state. A state is a pair \((R,P)\), with \(R\) the true profile and \(P\) the submitted profile, and a coalition modality changes only the report coordinates of the coalition:
\[
   (R,P)\ E_C\ (R,Q).
\]
The edge records a possible report change; the welfare comparison used to judge the change is still read from \(R\).

The proof-theoretic object is the abstract algebra of these labelled report-replacement edges. For coalitions \(C,D\subseteq N\), two successive replacements compose by union:
\[
   E_C\circ E_D=E_{C\cup D}.
\]
Together with S5 behaviour for each label, empty-coalition identity, and monotonicity under inclusion, this defines the frame class \(\mathsf{Dev}(N)\). The central theorem proves soundness and completeness of a Hilbert system \(H_{\mathrm{bp}}\) for that class.

The modal-logic claim is deliberately specific, and its status is separated from the general correspondence background. General Sahlqvist/canonicity and Kracht-style results account for the canonicity and first-order correspondence of the fixed S5, identity, inclusion, and reduction axioms once the language contains one primitive modality for each coalition \cite{Sahlqvist1975,BlackburnDeRijkeVenema2001,Kracht1999}. That is the background part: it explains why the axiom table is canonical. The paper-specific part starts from the interpretation and use of that table. The labels are not arbitrary S5 modalities; they are the report-replacement actions of the fixed semilattice \((2^N,\cup,\emptyset)\), where composition is coalition union and truth is split between true preferences and submitted reports. Theorem~\ref{thm:completeness} gives a self-contained proof for this fixed table and exposes the midpoint step \(R_{C\cup D}\subseteq R_C\circ R_D\), the step that makes exact report factorization visible inside the canonical construction. Proposition~\ref{prop:rectangular-representation} is independent of that metatheorem route: it proves that the abstract equations do not by themselves recover report coordinates, and identifies the extra coordinate-separation condition needed for a genuine product representation.

The paper has three main contributions.
\begin{description}[leftmargin=7mm,style=nextline]
  \item[Modal core.] Theorem~\ref{thm:completeness} fixes the coalition signature and proves completeness for the concrete multiplication table \(C,D\mapsto C\cup D\). The standard canonicity literature supplies the general preservation background; the paper supplies the report-deviation reading and the direct midpoint proof for the reverse composition row.
  \item[Representation boundary.] Proposition~\ref{prop:rectangular-representation} is not another completeness theorem. It shows where product logic would start with coordinates and where \(\mathsf{Dev}(N)\) does not: coordinate separation is exactly what turns an abstract component into a report-coordinate product.
  \item[Audit theorems and case study.] Strategy-proofness, group strategy-proofness, Gibbard--Satterthwaite, and single-peaked median results are translated, not reproved. The new social-choice statements are not replacement impossibility theorems; they are delta theorems for changing a representation after an intended domain has already been certified. Theorem~\ref{thm:boundary-safety} says that a larger report domain can create only boundary-row witnesses. Proposition~\ref{prop:update-safety} says that a public deletion preserves the \(\mathsf{Dev}(N)\) algebra exactly when surviving edges have surviving factor midpoints. Section~\ref{sec:single-peaked} runs both tests on a five-agent single-peaked median example, reporting the same witness as \textsc{edge-deleted}, \textsc{boundary-witness}, or \textsc{unsafe-update} under three representation changes.
\end{description}

\noindent\textbf{Restricted-domain caution.}
Manipulation formulas can be evaluated on any displayed labelled audit graph. The completeness theorem applies only to frames satisfying the \(\mathsf{Dev}(N)\) equations. Coordinate-product report domains, such as \(\prod_i M_i\), preserve those equations. An arbitrary subset of report profiles need not: a mixed report required for an \(E_C\circ E_D\) factorization may be missing. Public restrictions are therefore checked by the factor-closure criterion of Proposition~\ref{prop:update-safety}.

\noindent\textbf{Language and input conventions.}
The completeness theorem is stated for the pure coalition-modal language with arbitrary propositional letters. The social-choice application interprets letters as outcome atoms \(o_x\) and true-preference atoms \(p^i_{xy}\). In Section~\ref{sec:complexity}, finite claims are input-relative: a rule may be listed as a table, generated by a domain oracle, submitted as an explicit labelled graph, or audited as a typed modal certificate. The fixed-signature results assume that \(N\) is fixed before the input is read.

\smallskip
\noindent\textbf{Notation.}
Throughout, \(R\) denotes the true-preference profile and \(P,Q\) denote submitted report profiles. The atom \(o_x\) says that the submitted report profile selects alternative \(x\), while \(p^i_{xy}\) says that agent \(i\) truly ranks \(x\) above \(y\). The relation \(E_C\) is the report-deviation relation for coalition \(C\). The class \(\mathsf{Dev}(N)\) consists of the labelled S5 frames satisfying empty-coalition identity, coalition inclusion, and exact union-composition, and \(H_{\mathrm{bp}}\) is the Hilbert system for this class.

\begin{figure}[t]
\centering
\setlength{\fboxsep}{6pt}
\fbox{\begin{minipage}{0.92\linewidth}\small
\textbf{Proof and audit dependency map.}
\[
\begin{array}{ccccc}
\text{Report algebra} & \Longrightarrow & \mathsf{Dev}(N) & \Longrightarrow & H_{\mathrm{bp}}\text{ completeness}\\
\text{Lemma~\ref{lem:algebra}} && \text{D1--D4} && \text{Lemma~\ref{lem:canonical-factorization}, Theorem~\ref{thm:completeness}}\\[1mm]
\mathsf{Dev}(N) & + & \text{coordinate separation} & \Longrightarrow & \text{report product recovery}\\
\text{Lemma~\ref{lem:rectangular-mixing}} && \text{Proposition~\ref{prop:rectangular-representation}} && \\[1mm]
\text{witness record} & + & \text{domain change} & \Longrightarrow & \text{boundary-row audit}\\
\text{Lemma~\ref{lem:witness}} && \text{Theorem~\ref{thm:boundary-safety}} && \\[1mm]
\text{public survivor set} & + & \text{factor closure} & \Longleftrightarrow & \mathsf{Dev}(N)\text{ subframe}
\end{array}
\]
\end{minipage}}
\caption{Dependency map for the main formal claims. The top row is the modal proof-theoretic spine; the lower rows are the representation and audit uses of the same relation algebra.}
\label{fig:dependency-map}
\end{figure}

\paragraph{Comparison with nearby formalisms.}
The closest modal formalisms differ in their primitives and in what they assume at the start. A fusion of S5 logics gives independent equivalence relations, but not the coalition order, the identity action for \(\emptyset\), or the multiplication table \(C\cdot D=C\cup D\). Product logics start from visible dimensions and study the modal logic of the resulting product frame \cite{GabbayShehtman1998,GabbayShehtman2002,Shehtman2023}; \(\mathsf{Dev}(N)\) starts with an abstract semilattice action and then proves when coordinates can be recovered. Dynamic logic and relation algebra expose composition as program or relation syntax \cite{HarelKozenTiuryn2000,Balbiani2001,Maddux2006}; here composition is not an object-language term former, but a finite table of primitive coalition modalities. Polymodal correspondence theory supplies the canonicity background for that table. What it does not supply is the report-deviation interpretation of the labels, the theorem separating abstract components from coordinate products, or the two audit tests that diagnose changes of representation: boundary rows for off-domain rule extensions and factor closure for public deletions.

\paragraph{Fixed-signature separation.}
For fixed \(N\), the language is an ordinary finite polymodal language with one primitive modality for each coalition. Standard canonicity supplies the background guarantee that the S5, identity, inclusion, and reduction principles define a canonical elementary class. The paper-specific part begins after that guarantee: Theorem~\ref{thm:completeness} packages the fixed multiplication row \(C,D\mapsto C\cup D\) as report replacement and gives the explicit midpoint proof for the reverse inclusion \(R_{C\cup D}\subseteq R_C\circ R_D\); Proposition~\ref{prop:rectangular-representation} then proves that satisfying the abstract equations is still weaker than being a genuine report-coordinate product. Thus the completeness theorem is the fixed-signature proof-theoretic spine, while the representation theorem is the boundary result that product logic would assume at the outset.

Coalition logic and game-form semantics provide languages for ability and effectivity \cite{Pauly2002,PelegPeters2010,PelegHolzman2017}. Modal social-choice work has represented social-choice functions and classical theorems \cite{TroquardHoekWooldridge2011,CinaEndriss2016}, and dynamic or epistemic approaches study strategic voting and mechanism change \cite{ChopraPacuitParikh2004,vanDitmarschLangSaffidine2013,ParmannAgotnes2021,MittelmannEtAl2025,GalimullinEtAl2025}. The present framework is narrower: its modalities are report-replacement edges, its atoms distinguish true preferences from reported outcomes, and its audit results track which field of a stored witness changes when a report domain or public submodel is altered.

\noindent\textbf{Organization.}
Section~2 defines biprofile semantics. Section~3 proves soundness and completeness for abstract \(\mathsf{Dev}(N)\)-frames and separates product representability by coordinate separation. Sections~4--6 give the social-choice translations and witness language. Section~7 gives the single-peaked/off-domain/update example. Section~8 records the finite audit contract, an explicit supplied-certificate schema, a small Lean proof companion, bounded Alloy checks, and enumeration baselines.

\section{Biprofile semantics}

Let \(N=\{1,\ldots,n\}\) be a finite set of agents and let \(A\) be a finite set of alternatives, \(|A|=m\geq 2\). Write \(\mathcal L(A)\) for the set of strict linear orders on \(A\), and put \(L=|\mathcal L(A)|=m!\). A profile is an element of \(\mathcal L(A)^N\). If \(R\in\mathcal L(A)^N\), then \(x\succ_i^R y\) means that agent \(i\) ranks \(x\) above \(y\) in the true order \(R_i\), and \(x\succeq_i^R y\) abbreviates \(x=y\) or \(x\succ_i^R y\).

A resolute social choice function is a map
\[
   f:\mathcal L(A)^N\to A .
\]
It induces a normal-form game form whose pure strategies are reported orders and whose outcome at report profile \(P\) is \(f(P)\). The associated biprofile state space is
\[
   S_f=\mathcal L(A)^N\times\mathcal L(A)^N.
\]
The sincere states are
\[
   T_f=\{(R,R):R\in\mathcal L(A)^N\}.
\]
For \(C\subseteq N\), write \(Q\equiv_{-C}P\) if \(Q_j=P_j\) for every \(j\notin C\).

\begin{definition}[Biprofile model]
The biprofile model associated with \(f\) is
\[
   \mathfrak M_f=(S_f,\{E_C:C\subseteq N\},V_f),
\]
where
\[
  (R,P)E_C(R',Q)
  \quad\text{iff}\quad
  R'=R\ \text{ and }\ Q\equiv_{-C}P.
\]
The valuation is defined by
\[
\begin{array}{rcl}
  \mathfrak M_f,(R,P)\models o_x
  &\text{iff}& f(P)=x,\\[1mm]
  \mathfrak M_f,(R,P)\models p^i_{xy}
  &\text{iff}& x\succ_i^R y,\\[1mm]
  \mathfrak M_f,(R,P)\models t^i_x
  &\text{iff}& x \text{ is the } R_i\text{-maximal alternative.}
\end{array}
\]
Here \(o_x\) is an outcome atom, \(p^i_{xy}\) is a true-preference atom, and \(t^i_x\) is a true-top atom.
\end{definition}

The language \(\mathcal L_{\mathrm{bp}}(N,A)\) is generated by
\[
  \varphi ::= o_x \mid p^i_{xy} \mid t^i_x \mid
  \neg\varphi \mid (\varphi\wedge\varphi) \mid \langle C\rangle\varphi,
\]
where \(x,y\in A\), \(i\in N\), and \(C\subseteq N\). The dual \([C]\varphi\) abbreviates \(\neg\langle C\rangle\neg\varphi\). The modal clause is
\[
  \mathfrak M_f,(R,P)\models \langle C\rangle\varphi
  \quad\text{iff}\quad
  \text{there is }Q\equiv_{-C}P\text{ such that }
  \mathfrak M_f,(R,Q)\models\varphi .
\]
Thus \(\langle C\rangle\) is a report-changing modality. It never changes \(R\). This invariance is the semantic reason why a formula can compare an outcome after a deviation with preferences held fixed before the deviation.

For orientation, consider two agents and a current report profile \(P=(P_1,P_2)\) under a fixed true profile \(R\). Agent 1's deviation relation moves horizontally by replacing only \(P_1\); agent 2's deviation relation moves vertically by replacing only \(P_2\); the grand-coalition relation allows both moves. The logical diamond \(\langle C\rangle\) therefore behaves like a controlled coordinate projection rather than like an arbitrary graph edge.

\begin{figure}[!htbp]
\centering
\[
\begin{array}{ccc}
(R,(P_1,P_2)) & \xrightarrow{E_{\{1\}}} & (R,(Q_1,P_2))\\[2mm]
\downarrow E_{\{2\}} & & \downarrow E_{\{2\}}\\[2mm]
(R,(P_1,Q_2)) & \xrightarrow{E_{\{1\}}} & (R,(Q_1,Q_2))
\end{array}
\qquad
(R,(P_1,P_2))\xrightarrow{E_{\{1,2\}}}(R,(Q_1,Q_2))
\]
\caption{Coordinate reading of report deviations. All states share the same true profile \(R\); only submitted reports vary.}
\label{fig:biprofile-coordinate}
\end{figure}

\subsection*{Running example: plurality with one profitable report change}
We use the following small example only as an orientation device; none of the later theorems depends on plurality. Let \(N=\{1,2,3\}\), let \(A=\{a,b,c\}\), and let \(f\) be plurality with the fixed tie-breaking order \(a\triangleright b\triangleright c\). Take the true profile
\[
   R_1:b\succ a\succ c,
   \qquad
   R_2:a\succ b\succ c,
   \qquad
   R_3:c\succ b\succ a .
\]
At the sincere state \(s_0=(R,R)\), the top votes are one for each alternative, so the tie-break selects \(a\); hence \(o_a\) holds at \(s_0\). Now let agent 3 submit the report \(Q_3:b\succ a\succ c\), while agents 1 and 2 keep their sincere reports. Put \(Q=(R_1,R_2,Q_3)\). Then \(s_0E_{\{3\}}(R,Q)\), because only the third report coordinate has changed, and plurality now selects \(b\). Since agent 3's true order is \(c\succ b\succ a\), the atom \(p^3_{ba}\) is true at both states. Thus the witness for \(\mu_{\{3\}}\) is the successor \((R,Q)\): the current outcome is \(a\), an \(E_{\{3\}}\)-neighbour has outcome \(b\), and the true-coordinate says that agent 3 prefers \(b\) to \(a\).
 In modal notation this concrete state verifies
\[
   \mathfrak M_f,s_0\models
   o_a\wedge \langle\{3\}\rangle(o_b\wedge p^3_{ba}),
\]
so the one-agent strategy-proofness sentence fails for this particular plurality rule. This is the small case study used throughout the paper: the social-choice fact is elementary, but the example makes visible which coordinate each symbol reads and which coordinate a deviation is allowed to mutate.

The same example also shows the coalition algebra. If agent 1 changes a report and then agent 3 changes a report, the resulting state differs from \(s_0\) only in coordinates \(\{1,3\}\). Lemma~\ref{lem:algebra} says that, at the level of one-shot reachability, this two-step path is exactly an \(E_{\{1,3\}}\)-move. The S5 part of the logic records that each fixed coalition can move back and forth inside its own report fibre; the union axiom records that independent coordinate changes compose by taking the union of the changed indices.

This running example fixes the interpretation of the later formal clauses. A modal atom such as \(o_b\) is the result of evaluating the rule on the report coordinate. A welfare atom such as \(p^3_{ba}\) is read from the true coordinate. A manipulation witness is therefore a labelled edge in the report graph together with a comparison in the fixed true profile. The later proof theory abstracts exactly this labelled graph structure, while the computational sections ask how such witnesses are generated or represented under different input models.

\paragraph{A worked witness record.}
For the plurality state above, the witness can be written as
\[
   w=(R,R,\{3\},Q,a,b),
\]
where \((R,R)E_{\{3\}}(R,Q)\), \(f(R)=a\), \(f(Q)=b\), and \(p^3_{ba}\) is read from the unchanged true coordinate. This record is more than a restatement of manipulation because each component can be varied independently. If a restricted report domain removes \(Q_3\), the same welfare comparison and rule values remain available as data, but the edge field of \(w\) is absent. If an off-domain extension admits \(Q_3\), the formula \(\mu_{\{3\}}\) is unchanged and the failure is localized as an added successor. If a report update overwrites agent 3's report by \(Q_3\), the checking question is whether the resulting public image still has the factor closure required by the \(\mathsf{Dev}(N)\) equations. Thus the witness separates four possible sources of failure: the true-preference comparison, the current rule value, the deviated rule value, and the admissible report edge. Section~\ref{sec:single-peaked} uses the same record format for the restricted-domain/off-domain/update case study.

\begin{lemma}[Algebra of report-deviation relations]\label{lem:algebra}
For every finite resolute rule \(f\) and all coalitions \(C,D\subseteq N\), the following hold on each true-profile fibre \(\{R\}\times\mathcal L(A)^N\):
\begin{enumerate}[label=(\alph*)]
  \item \(E_C\) is an equivalence relation.
  \item If \(C\subseteq D\), then \(E_C\subseteq E_D\).
  \item \(E_\varnothing\) is the identity relation, and \(E_N\) is the universal relation on the fibre.
  \item \(E_C\circ E_D=E_{C\cup D}\).
\end{enumerate}
Consequently,
\[
  \mathfrak M_f,s\models \langle C\rangle\langle D\rangle\varphi
  \quad\text{iff}\quad
  \mathfrak M_f,s\models \langle C\cup D\rangle\varphi .
\]
\end{lemma}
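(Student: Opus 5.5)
The plan is to reduce everything to the description of \(E_C\) on a fixed fibre: \((R,P)E_C(R,Q)\) iff the set of disagreeing coordinates \(\Delta(P,Q)=\{j\in N:P_j\neq Q_j\}\) is contained in \(C\). This is just a reformulation of \(Q\equiv_{-C}P\). Since \(E_C\) never changes the first coordinate, every relation stays inside a single fibre \(\{R\}\times\mathcal L(A)^N\). We may therefore argue about report profiles alone and reattach \(R\) at the end.

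For (a), we check the three equivalence properties through \(\Delta\). Reflexivity holds because \(\Delta(P,P)=\varnothing\subseteq C\). Symmetry holds because \(\Delta\) is symmetric. Transitivity uses \(\Delta(P,S)\subseteq\Delta(P,Q)\cup\Delta(Q,S)\): if \(P_j\neq S_j\), then \(P_j\neq Q_j\) or \(Q_j\neq S_j\). For (b), if \(C\subseteq D\) then \(\Delta\subseteq C\) implies \(\Delta\subseteq D\). For (c), \(\Delta(P,Q)\subseteq\varnothing\) forces \(P=Q\), so \(E_\varnothing\) is the identity. The condition \(\Delta\subseteq N\) is vacuous, so \(E_N\) relates any two states with the same \(R\).

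For (d), the inclusion \(E_C\circ E_D\subseteq E_{C\cup D}\) follows from the triangle inequality for \(\Delta\) used in (a): a path \(P\to Q\to S\) with \(\Delta(P,Q)\subseteq C\) and \(\Delta(Q,S)\subseteq D\) gives \(\Delta(P,S)\subseteq C\cup D\). The reverse inclusion is the one real step, since it needs an explicit midpoint. Given \(\Delta(P,S)\subseteq C\cup D\), define \(Q\) by \(Q_j=S_j\) for \(j\in C\) and \(Q_j=P_j\) for \(j\notin C\). Then \(\Delta(P,Q)\subseteq C\). Also \(\Delta(Q,S)\subseteq\Delta(P,S)\setminus C\subseteq D\), because outside \(C\) the profile \(Q\) agrees with \(P\). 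The midpoint \((R,Q)\) is a legitimate state because the report space is the full product \(\mathcal L(A)^N\); this is exactly the factorization that can fail on restricted domains, as the paper's caution anticipates. The order of composition does not matter, since \(C\cup D=D\cup C\), but we will fix the convention that \(E_C\circ E_D\) means an \(E_C\)-step followed by an \(E_D\)-step and build the midpoint accordingly.

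For the modal consequence, we unfold the clauses. \(s\models\langle C\rangle\langle D\rangle\varphi\) holds iff there are \(t,u\) with \(sE_Ct\), \(tE_Du\) and \(u\models\varphi\), that is, iff some \(u\) with \(s(E_C\circ E_D)u\) satisfies \(\varphi\). By (d) this is the same as \(s\models\langle C\cup D\rangle\varphi\). The main obstacle is only the midpoint construction in (d); everything else is bookkeeping on \(\Delta\).
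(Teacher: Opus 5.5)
Your proof is correct and follows the paper's argument: it checks the equivalence, inclusion, and identity/universality clauses coordinatewise, and it obtains the reverse half of (d) from an explicit mixed midpoint report. The only cosmetic difference is that your midpoint copies the target's reports on all of \(C\), whereas the paper copies them only on \(C\setminus D\); both choices give \(\Delta(P,Q)\subseteq C\) and \(\Delta(Q,S)\subseteq D\).
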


\begin{proof}
Reflexivity, symmetry, and transitivity of \(E_C\) are inherited from equality of outsider reports. Inclusion is immediate. The empty coalition changes no coordinate, while the grand coalition may change every report coordinate. For composition, if \((R,P)E_C(R,Q)E_D(R,H)\), then every coordinate on which \(P\) and \(H\) differ is in \(C\cup D\). Conversely, if \((R,P)E_{C\cup D}(R,H)\), define
\[
Q_i=
\begin{cases}
H_i, & i\in C\setminus D,\\
P_i, & \text{otherwise.}
\end{cases}
\]
Then \((R,P)E_C(R,Q)\) and \((R,Q)E_D(R,H)\). The final equivalence is the modal reading of relational composition.
\end{proof}

This elementary algebra is the reason for using coalition labels instead of a single untyped deviation relation. The formula \(\langle\{1\}\rangle\langle\{2\}\rangle\varphi\) says that agent 1 first changes her report and agent 2 then changes his report. Lemma~\ref{lem:algebra} states that, for one-shot reachability, the order of these coordinate changes is immaterial and the composite move is exactly an \(E_{\{1,2\}}\)-move. Later dynamic updates refine this picture by adding preconditions and postconditions to such coordinate changes.

\section{Proof theory of deviation frames}\label{sec:proof}

Let \(\mathsf{Dev}(N)\) be the class of frames \((S,\{E_C\}_{C\subseteq N})\) satisfying the abstract report-deviation equations below:
\begin{enumerate}[label=(D\arabic*),leftmargin=8mm]
  \item each \(E_C\) is an equivalence relation;
  \item \(E_\emptyset=\{(s,s):s\in S\}\) is the identity relation;
  \item if \(C\subseteq D\), then \(E_C\subseteq E_D\);
  \item \(E_C\circ E_D=E_{C\cup D}\) for all \(C,D\subseteq N\).
\end{enumerate}

The concrete grand-coalition relation is universal inside each fixed true-profile fibre. Abstractly, \(E_N\)-components are the fibres on which all report coordinates may vary; no global universal relation across components is imposed. Every concrete biprofile frame belongs to \(\mathsf{Dev}(N)\), but an arbitrary \(\mathsf{Dev}(N)\)-component need not be a coordinate product. Theorem~\ref{thm:completeness} is therefore an abstract completeness theorem. The next lemmas record what is automatic from the \(\mathsf{Dev}(N)\) equations and what must be added to recover genuine coordinates.

For a proof-theory reader, the important distinction is between correspondence and construction. The modal axiom \(\langle C\rangle\langle D\rangle\varphi\leftrightarrow\langle C\cup D\rangle\varphi\) corresponds semantically to \(E_C\circ E_D=E_{C\cup D}\). Completeness still requires the canonical relations to realize this equality. The forward half is immediate from the box translation, but the reverse half requires constructing a canonical midpoint. Lemma~\ref{lem:canonical-factorization} isolates exactly that step; no product coordinates or external program algebra are used in it.

A useful way to view exact composition is as a finite rectangular-mixing property. Given two states \(s,t\) in the same \(E_N\)-component and a coalition \(C\), the equation \(E_C\circ E_{N\setminus C}=E_N\) forces an intermediate state that combines the \(C\)-move and the outside-\(C\) move. Thus missing corners cannot occur inside a \(\mathsf{Dev}(N)\)-frame. They can occur only after an operation, such as a public deletion, is treated as a submodel without rechecking the union-composition equations.

\begin{lemma}[Rectangular mixing from exact composition]
\label{lem:rectangular-mixing}
Let \((S,\{E_C\}_{C\subseteq N})\in\mathsf{Dev}(N)\). If \(sE_Nt\), then for every \(C\subseteq N\) there is a state \(u\) such that \(uE_Cs\) and \(uE_{N\setminus C}t\).
\end{lemma}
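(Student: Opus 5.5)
The plan is to apply the exact-composition axiom (D4) with the pair of coalitions \(C\) and \(N\setminus C\), whose union is \(N\). We need \(u\) with \(uE_Cs\) and \(uE_{N\setminus C}t\). Equivalently, we need \(sE_Cu\) and \(uE_{N\setminus C}t\), because \(E_C\) is symmetric by (D1). So the statement is exactly that the pair \((s,t)\) lies in \(E_C\circ E_{N\setminus C}\), reading composition in the diagrammatic order used in the proof of Lemma~\ref{lem:algebra}: first an \(E_C\)-step, then an \(E_{N\setminus C}\)-step.

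The steps are as follows. First, from \(sE_Nt\) and \(N=C\cup(N\setminus C)\), rewrite the hypothesis as \((s,t)\in E_{C\cup(N\setminus C)}\). Second, invoke (D4) in the direction \(E_{C\cup D}\subseteq E_C\circ E_D\) with \(D=N\setminus C\). This produces a midpoint \(u\) with \(sE_Cu\) and \(uE_{N\setminus C}t\). Third, use symmetry of \(E_C\) from (D1) to turn \(sE_Cu\) into \(uE_Cs\). The degenerate cases \(C=\emptyset\) and \(C=N\) need no separate treatment, since (D4) covers them. As a sanity check, for \(C=\emptyset\) axiom (D2) forces \(u=s\), and then \(sE_Nt\) gives the required \(E_N\)-edge from \(u\) to \(t\).

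The only point needing care is the orientation convention for \(\circ\). If the paper instead read \(E_C\circ E_D\) as "first \(D\), then \(C\)", I would not rederive anything. I would apply (D4) to the pair \((N\setminus C, C)\), or use commutativity of union to get \(E_C\circ E_D=E_{C\cup D}=E_D\circ E_C\). Either way a midpoint exists in whichever order is needed. Symmetry of both \(E_C\) and \(E_{N\setminus C}\) then lets me orient the two edges as the statement requires. No coordinate structure is used anywhere, which is the intended point: the missing-corner phenomenon is excluded purely by the abstract equation (D4).
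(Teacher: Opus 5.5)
Your proposal is correct and matches the paper's proof: apply (D4) with \(C\) and \(N\setminus C\) to obtain \(u\) with \(sE_Cu\) and \(uE_{N\setminus C}t\), then flip the first edge using symmetry of \(E_C\) from (D1). Your remarks on orientation and the degenerate case \(C=\emptyset\) are accurate but not needed.
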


\begin{proof}
Since \(E_C\circ E_{N\setminus C}=E_N\), the edge \(sE_Nt\) gives a state \(u\) with \(sE_Cu\) and \(uE_{N\setminus C}t\). The relation \(E_C\) is symmetric, so \(uE_Cs\).
\end{proof}

\begin{proposition}[Finite coordinate representation]
\label{prop:rectangular-representation}
Let \(X\) be an \(E_N\)-component of a finite abstract deviation frame. Then \(X\) is isomorphic to a concrete report-coordinate product \(\prod_{i\in N}M_i\), with
\[
   s\,E_C\,t\quad\text{iff}\quad h_i(s)=h_i(t)\text{ for all }i\notin C,
\]
if and only if it satisfies coordinate separation:
\[
   \bigl(s\,E_{N\setminus\{i\}}\,t\text{ for every }i\in N\bigr)
   \quad\Longrightarrow\quad s=t .
\]
\end{proposition}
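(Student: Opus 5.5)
The plan is to take as coordinates the quotients by the ``all-but-one'' relations. For \(i\in N\), put \(M_i=X/E_{N\setminus\{i\}}\) and let \(h_i(s)\) be the \(E_{N\setminus\{i\}}\)-class of \(s\). Read \(h_i(s)=h_i(t)\) as ``\(s\) and \(t\) agree on report coordinate \(i\)''. Set \(h=(h_i)_{i\in N}:X\to\prod_i M_i\).

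\textbf{Necessity.} This direction is immediate. In a concrete product, \(s\,E_{N\setminus\{i\}}\,t\) says that \(h_i(s)=h_i(t)\). If this holds for every \(i\), then all coordinates agree, and injectivity of the isomorphism gives \(s=t\).

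\textbf{Sufficiency.} Assume coordinate separation. Three properties of \(h\) must be checked.

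\emph{Injectivity.} This is coordinate separation verbatim.

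\emph{The relation clause, forward direction.} If \(s\,E_C\,t\) and \(i\notin C\), then \(C\subseteq N\setminus\{i\}\). By (D3), \(s\,E_{N\setminus\{i\}}\,t\), so \(h_i(s)=h_i(t)\).

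\emph{The relation clause, reverse direction.} We need \(\bigcap_{i\notin C}E_{N\setminus\{i\}}\subseteq E_C\) on \(X\). For \(C=N\) this is the fact that \(X\) is one \(E_N\)-component. Otherwise I will prove the intersection law
\[
E_C\cap E_D=E_{C\cap D}
\]
and iterate it over \(i\notin C\). The inclusion \(\supseteq\) is (D3).

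For \(\subseteq\), suppose \(s\,E_C\,t\) and \(s\,E_D\,t\).
\begin{enumerate}[label=(\roman*)]
  \item Split \(D=(C\cap D)\cup(D\setminus C)\). By (D4) there is \(u\) with \(s\,E_{C\cap D}\,u\,E_{D\setminus C}\,t\).
  \item Since \(u\,E_{C\cap D}\,s\,E_C\,t\), (D3) and transitivity give \(u\,E_C\,t\).
  \item So \(u\) is related to \(t\) by both \(E_C\) and \(E_{D\setminus C}\), and \(C\cap(D\setminus C)=\emptyset\).
  \item For each \(i\), at least one of these two coalitions omits \(i\). By (D3), \(u\,E_{N\setminus\{i\}}\,t\).
  \item Separation now forces \(u=t\). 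Hence \(s\,E_{C\cap D}\,t\).
\end{enumerate}
This is the step I expect to be the crux. The disjoint-coalition case is exactly where separation enters; without it, abstract \(\mathsf{Dev}(N)\) frames can have distinct states that are related by both \(E_A\) and \(E_B\) with \(A\cap B=\emptyset\).

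\emph{Surjectivity.} Given representatives \(s_i\in X\), we need a single \(u\) with \(u\,E_{N\setminus\{i\}}\,s_i\) for all \(i\). Build it by induction on \(K\subseteq N\), as a state \(u_K\) satisfying this condition for \(i\in K\); start from \(u_\emptyset\) arbitrary in \(X\).
\begin{enumerate}[label=(\roman*)]
  \item To add an index \(j\), note \(u_K\,E_N\,s_j\), since \(X\) is a component.
  \item By (D4) with \(N=\{j\}\cup(N\setminus\{j\})\), in the spirit of Lemma~\ref{lem:rectangular-mixing}, there is \(v\) with \(u_K\,E_{\{j\}}\,v\,E_{N\setminus\{j\}}\,s_j\).
  \item For \(i\in K\) with \(i\neq j\), we have \(E_{\{j\}}\subseteq E_{N\setminus\{i\}}\). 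Hence \(v\) keeps \(u_K\)'s \(i\)-class, which is \(h_i(s_i)\).
  \item Set \(u_{K\cup\{j\}}=v\).
\end{enumerate}
Once the intersection law is in hand, the remaining steps are routine. Finiteness of the frame is not actually needed in this argument.
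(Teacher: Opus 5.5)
Your proof is correct. Most of it matches the paper's proof: the coordinates $M_i=X/E_{N\setminus\{i\}}$, injectivity from separation, surjectivity by the inductive mixing step through $E_{\{j\}}\circ E_{N\setminus\{j\}}=E_N$, and preservation of $E_C$ by (D3). The genuine difference is how you reflect the relations.

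The paper does this in one shot. From $h_i(s)=h_i(t)$ for all $i\notin C$, Lemma~\ref{lem:rectangular-mixing} with the complementary split $N=C\cup(N\setminus C)$ gives $w$ with $w\,E_C\,s$ and $w\,E_{N\setminus C}\,t$. Then $w\,E_{N\setminus\{i\}}\,t$ holds for every $i$: through $s$ and the hypothesis when $i\notin C$, and through $N\setminus C\subseteq N\setminus\{i\}$ when $i\in C$. Separation gives $w=t$, hence $s\,E_C\,t$.

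You instead prove the binary meet law $E_C\cap E_D=E_{C\cap D}$ on $X$, via a midpoint for the split $D=(C\cap D)\cup(D\setminus C)$, and iterate it over the finitely many $i\notin C$. The collapsing trick is the same: factor an edge, show the midpoint agrees with an endpoint on every all-but-one quotient, and apply separation. The paper's route is therefore shorter and needs no iteration.

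What your route buys is a structural reformulation. The iterated law at $C=\emptyset$ reads $\bigcap_{i\in N}E_{N\setminus\{i\}}=E_\emptyset$, which is separation itself. So on a component, separation is \emph{equivalent} to $C\mapsto E_C$ preserving intersections. Together with (D4), which makes $E_C\circ E_D$ the join, this makes $C\mapsto E_C$ a lattice homomorphism from $(2^N,\cup,\cap)$ into the equivalence relations on $X$. The paper's two-state example is then exactly the failure $E_{\{1\}}\cap E_{\{2\}}\neq E_\emptyset$.

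Your remark that finiteness of the frame is never used is right, and it applies to the paper's argument too, which needs only finiteness of $N$. One detail to state explicitly: when you apply separation to the midpoint $u$, note that $u\in X$, which holds because every $E_C\subseteq E_N$.
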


\begin{proof}
Necessity is immediate for a concrete product: if two tuples agree in every coordinate, they are the same tuple.

For sufficiency, define the candidate coordinate set for agent \(i\) by
\[
   M_i=X/E_{N\setminus\{i\}},
\]
and define
\[
   h:X\to\prod_{i\in N}M_i,
   \qquad
   h(s)=\big([s]_{E_{N\setminus\{1\}}},\ldots,[s]_{E_{N\setminus\{n\}}}\big).
\]
If \(h(s)=h(t)\), then \(sE_{N\setminus\{i\}}t\) for every \(i\). Coordinate separation gives \(s=t\), so \(h\) is injective.

To prove surjectivity, take an arbitrary tuple \((m_1,\ldots,m_n)\in\prod_iM_i\), and choose representatives \(s_i\in X\) with \(m_i=[s_i]_{E_{N\setminus\{i\}}}\). We construct states \(u_k\) whose first \(k\) coordinates are the target classes. Start with any \(u_0\in X\). Given \(u_k\), apply Lemma~\ref{lem:rectangular-mixing} inside the \(E_N\)-component to the pair \((u_k,s_{k+1})\) with coalition \(\{k+1\}\). The resulting state \(u_{k+1}\) satisfies
\[
   u_{k+1}E_{\{k+1\}}u_k
   \quad\text{and}\quad
   u_{k+1}E_{N\setminus\{k+1\}}s_{k+1}.
\]
The first relation preserves the first \(k\) already fixed coordinate classes, and the second fixes the \((k+1)\)-st coordinate class. After \(n\) steps, \(h(u_n)=(m_1,\ldots,m_n)\).

It remains to prove that \(h\) preserves and reflects the labelled relations. If \(sE_Ct\), then by monotonicity \(sE_{N\setminus\{i\}}t\) for each \(i\notin C\), because \(C\subseteq N\setminus\{i\}\). Hence \(h_i(s)=h_i(t)\) outside \(C\).

Conversely, suppose \(h_i(s)=h_i(t)\) for every \(i\notin C\). Lemma~\ref{lem:rectangular-mixing} gives \(w\in X\) such that \(wE_Cs\) and \(wE_{N\setminus C}t\). If \(i\notin C\), then \(wE_Cs\), monotonicity, and the assumption \(sE_{N\setminus\{i\}}t\) imply \(wE_{N\setminus\{i\}}t\). If \(i\in C\), then \(N\setminus C\subseteq N\setminus\{i\}\), so \(wE_{N\setminus C}t\) implies \(wE_{N\setminus\{i\}}t\). Thus \(wE_{N\setminus\{i\}}t\) for every \(i\), and coordinate separation gives \(w=t\). Since \(wE_Cs\), symmetry yields \(sE_Ct\). Therefore \(h\) is a labelled-frame isomorphism onto the concrete product.
\end{proof}

The representation boundary is therefore not rectangularity itself. Exact union-composition already supplies the necessary mixed corners inside \(\mathsf{Dev}(N)\). The extra condition needed for genuine report coordinates is coordinate separation: different abstract states must not be indistinguishable by all single-coordinate quotients. This separation is useful because the proof theory concerns the abstract algebra, whereas concrete social-choice models supply separated report fibres.

\paragraph{Two finite frames for orientation.}
For two agents, the concrete two-by-two report square is the product case. Its four states may be written \((0,0),(1,0),(0,1),(1,1)\). The relation \(E_{\{1\}}\) keeps the second coordinate fixed, \(E_{\{2\}}\) keeps the first coordinate fixed, \(E_\emptyset\) is identity, and \(E_{\{1\}}\circ E_{\{2\}}=E_{\{1,2\}}\) is the full square inside the component.

The abstract class is larger. Let \(X=\{a,b\}\), let \(E_\emptyset\) be identity, and let \(E_{\{1\}}=E_{\{2\}}=E_{\{1,2\}}=X\times X\). Then all \(\mathsf{Dev}(N)\) equations hold: both singleton relations are equivalences, inclusion is trivial, and every non-empty composition is universal. The component is not a coordinate product in the intended separated sense. Coordinate separation fails, because \(aE_{\{1\}}b\) and \(aE_{\{2\}}b\) while \(a\ne b\). This example is why Theorem~\ref{thm:completeness} is stated for abstract frames, and why Proposition~\ref{prop:rectangular-representation} is needed separately for concrete report products.

\begin{remark}[Topological reading only]
Each \(E_C\)-class may be viewed as a block of a partition topology, but no topological completeness theorem is used. The formal preservation result below is the canonical completeness theorem.
\end{remark}

Let \(\mathsf H_{\mathrm{bp}}\) be the smallest normal multimodal Hilbert system containing all propositional tautologies, modus ponens, necessitation for each \([C]\), and the following schemes:
\[
\begin{array}{ll}
\mathsf K_C: & [C](\varphi\to\psi)\to([C]\varphi\to[C]\psi),\\[1mm]
\mathsf T_C: & [C]\varphi\to\varphi,\\[1mm]
\mathsf I_\emptyset: & [\emptyset]\varphi\leftrightarrow\varphi,\\[1mm]
\mathsf 4_C: & [C]\varphi\to[C][C]\varphi,\\[1mm]
\mathsf 5_C: & \neg[C]\varphi\to[C]\neg[C]\varphi,\\[1mm]
\mathsf M_{CD}: & \langle C\rangle\varphi\to\langle D\rangle\varphi\quad(C\subseteq D),\\[1mm]
\mathsf U_{CD}: & \langle C\rangle\langle D\rangle\varphi\leftrightarrow
\langle C\cup D\rangle\varphi .
\end{array}
\]
The following lemma records the exact dual forms used in the canonical proof.

\begin{lemma}[Dual forms of inclusion and union]
\label{lem:dual-forms}
In any normal modal logic with the displayed diamond schemes, the following box schemes are derivable:
\[
   \mathsf M_{CD}^{\Box}:\quad [D]\chi\to[C]\chi\quad(C\subseteq D),
   \qquad
   \mathsf U_{CD}^{\Box}:\quad [C\cup D]\chi \leftrightarrow [C][D]\chi .
\]
Conversely, the displayed diamond schemes are obtained from these box schemes by duality.
\end{lemma}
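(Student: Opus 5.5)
The plan is to prove both box schemes by the standard duality \([C]\chi\leftrightarrow\neg\langle C\rangle\neg\chi\), which holds by definition of the box as an abbreviation. We also use the fact that a normal modal logic is closed under replacement of provable equivalents inside modal contexts. No frame semantics is needed; everything happens inside \(H_{\mathrm{bp}}\), using only propositional tautologies, modus ponens, necessitation, and \(\mathsf K_C\).

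For \(\mathsf M^{\Box}_{CD}\), fix \(C\subseteq D\) and instantiate \(\mathsf M_{CD}\) with \(\varphi:=\neg\chi\). This gives \(\langle C\rangle\neg\chi\to\langle D\rangle\neg\chi\). Contraposition yields \(\neg\langle D\rangle\neg\chi\to\neg\langle C\rangle\neg\chi\), which is literally \([D]\chi\to[C]\chi\).

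For \(\mathsf U^{\Box}_{CD}\), instantiate \(\mathsf U_{CD}\) with \(\varphi:=\neg\chi\), negate both sides, and obtain
\[
  \neg\langle C\rangle\langle D\rangle\neg\chi\leftrightarrow\neg\langle C\cup D\rangle\neg\chi .
\]
The right side is \([C\cup D]\chi\). For the left side, we need \(\neg\langle C\rangle\langle D\rangle\neg\chi\leftrightarrow[C][D]\chi\). By definition, \([C][D]\chi=\neg\langle C\rangle\neg\neg\langle D\rangle\neg\chi\). So it suffices to replace \(\neg\neg\langle D\rangle\neg\chi\) by \(\langle D\rangle\neg\chi\) under \(\langle C\rangle\).

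This replacement is the only step that is not pure propositional reasoning, and I expect it to be the main obstacle, although it is a minor one. It requires monotonicity of \(\langle C\rangle\): from \(\vdash\alpha\leftrightarrow\beta\) infer \(\vdash\langle C\rangle\alpha\leftrightarrow\langle C\rangle\beta\). The derivation goes as follows:
\begin{enumerate}
  \item Start from \(\vdash\neg\alpha\leftrightarrow\neg\beta\).
  \item Apply necessitation for \([C]\) to get \(\vdash[C](\neg\alpha\to\neg\beta)\), and likewise for the converse direction.
  \item Apply \(\mathsf K_C\) to each, giving \(\vdash[C]\neg\alpha\to[C]\neg\beta\) and its converse.
  \item Negate, giving \(\vdash\neg[C]\neg\alpha\leftrightarrow\neg[C]\neg\beta\). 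Then unfold \([C]\neg\alpha=\neg\langle C\rangle\neg\neg\alpha\) and remove the inner double negations by the same congruence, which gives \(\vdash\langle C\rangle\alpha\leftrightarrow\langle C\rangle\beta\).
\end{enumerate}
Applying this with \(\alpha=\neg\neg\langle D\rangle\neg\chi\) and \(\beta=\langle D\rangle\neg\chi\), chaining the biconditionals gives \([C\cup D]\chi\leftrightarrow[C][D]\chi\).

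The converse direction is symmetric. Instantiate the box schemes with \(\chi:=\neg\varphi\), contrapose or negate, and use the same congruence to rewrite \(\neg[C]\neg\varphi\) as \(\langle C\rangle\varphi\), including inside the nested \([C][D]\). This recovers \(\mathsf M_{CD}\) and \(\mathsf U_{CD}\). Hence the diamond and box tables are interderivable over the normal base, and the canonical proof may use whichever form is convenient.
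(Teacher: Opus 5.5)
Your route is the paper's: instantiate $\mathsf M_{CD}$ and $\mathsf U_{CD}$ at $\neg\chi$, contrapose or negate, and read off boxes via $[E]\chi:=\neg\langle E\rangle\neg\chi$. The inclusion case is correct exactly as you wrote it. You also rightly notice something the paper's proof passes over: $\neg\langle C\rangle\langle D\rangle\neg\chi$ is not syntactically $[C][D]\chi=\neg\langle C\rangle\neg\neg\langle D\rangle\neg\chi$. So a double negation has to be removed under $\langle C\rangle$, and likewise in the converse direction.

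The flaw is your derivation of that congruence. From $\vdash\alpha\leftrightarrow\beta$, steps 1--3 only give $\vdash\langle C\rangle\neg\neg\alpha\leftrightarrow\langle C\rangle\neg\neg\beta$. The final move, ``remove the inner double negations by the same congruence'', is an instance of the rule you are proving, so step 4 is circular.

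This cannot be repaired from necessitation and $\mathsf K_C$ alone. With $\langle C\rangle$ primitive, those principles only constrain $\langle C\rangle$ on negated arguments, while $\mathsf U_{CD}$ applies $\langle C\rangle$ to $\langle D\rangle\neg\chi$. Here is a concrete countermodel. Take two points $a,b$ with $R_\emptyset=R_{\{1\}}=R_{\{2\}}=\mathrm{Id}$ and $R_{\{1,2\}}$ universal. Evaluate $\langle C\rangle\langle D\rangle\psi'$ as $\langle C\cup D\rangle\psi'$, and evaluate $\langle C\rangle\psi$ relationally whenever $\psi$ is not itself a diamond formula. All displayed schemes, tautologies, modus ponens and necessitation stay valid. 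Yet $[\{1,2\}]p\leftrightarrow[\{1\}][\{2\}]p$ fails at $a$ when $p$ is true exactly at $a$.

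So the double-negation step must come from the notion of normality itself. One option is a Dual axiom $\langle C\rangle\varphi\leftrightarrow\neg[C]\neg\varphi$, which here amounts to $\langle C\rangle\varphi\leftrightarrow\langle C\rangle\neg\neg\varphi$ up to propositional logic. Another is a rule of replacement of provable equivalents. The lemma's hypothesis ``any normal modal logic'' licenses either, and the paper relies on this tacitly. You already announced this closure property at the start; simply cite it in place of steps 1--4, and your proof is complete.
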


\begin{proof}
For inclusion, dualize \(\langle C\rangle\neg\chi\to\langle D\rangle\neg\chi\) to obtain \([D]\chi\to[C]\chi\). For union, substitute \(\neg\chi\) into \(\langle C\rangle\langle D\rangle\varphi\leftrightarrow\langle C\cup D\rangle\varphi\), negate both sides, and use \([E]\chi:=\neg\langle E\rangle\neg\chi\). This gives
\[
   \neg\langle C\rangle\langle D\rangle\neg\chi
   \leftrightarrow
   \neg\langle C\cup D\rangle\neg\chi,
\]
which is \([C][D]\chi\leftrightarrow[C\cup D]\chi\). The converse direction is the same calculation with boxes and diamonds interchanged.
\end{proof}

The explicit identity scheme \(\mathsf I_\emptyset\) is essential: without it the canonical relation \(R_\emptyset\) would be an S5 equivalence relation, but not necessarily the identity relation required by \(\mathsf{Dev}(N)\).

The proof is canonical, but the non-routine obligation is worth isolating. By Lemma~\ref{lem:dual-forms}, the box form of \(\mathsf U_{CD}\) propagates boxed formulas along an \(R_C\)-edge followed by an \(R_D\)-edge, giving \(R_C\circ R_D\subseteq R_{C\cup D}\). The converse asks for a midpoint: from \(XR_{C\cup D}Z\) one must find a maximal theory \(Y\) with \(XR_CY\) and \(YR_DZ\). The proof below obtains \(Y\) from a finite-consistency argument. This keeps exact composition inside the canonical frame construction and does not presuppose product coordinates. Concrete product representation is addressed separately in Proposition~\ref{prop:rectangular-representation}.

\begin{figure}[!htbp]
\centering
\fbox{\begin{minipage}{0.90\textwidth}
\centering
\(\text{S5 labels}\) \(\Rightarrow\) \(R_C\) equivalence\quad\quad
\(\mathsf I_\emptyset\) \(\Rightarrow\) \(R_\emptyset=\mathrm{Id}\)\\[1mm]
\(\mathsf M_{CD}\) \(\Rightarrow\) \(R_C\subseteq R_D\) for \(C\subseteq D\)\\[1mm]
\(\mathsf U_{CD}^{\Box}\) \(\Rightarrow\) \(R_C\circ R_D\subseteq R_{C\cup D}\)\\[1mm]
\(XR_{C\cup D}Z\) \(+\) consistency of \(\Gamma\) \(\Rightarrow\)
\(\exists Y\,(XR_CY\text{ and }YR_DZ)\).
\end{minipage}}
\caption{Proof roadmap for the canonical frame argument. The only step not inherited from the routine S5/inclusion checks is the midpoint construction for the reverse composition inclusion.}
\label{fig:canonical-roadmap}
\end{figure}

\begin{lemma}[Canonical frame and factorization]
\label{lem:canonical-factorization}
Let \(\mathfrak M^c=(W,\{R_C\}_{C\subseteq N},V)\) be the canonical model of \(\mathsf H_{\mathrm{bp}}\), where
\[
   X R_C Y
   \quad\text{iff}\quad
   \{\psi:[C]\psi\in X\}\subseteq Y .
\]
Then \((W,\{R_C\})\in\mathsf{Dev}(N)\). Moreover, for all coalitions \(C,D\),
\[
   R_C\circ R_D=R_{C\cup D},
\]
and if \(XR_CY\), then \([C]\chi\in X\) iff \([C]\chi\in Y\).
\end{lemma}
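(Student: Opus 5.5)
The proof follows the standard canonical-model pattern, checking the conditions (D1)--(D4) one at a time; the only non-routine step is the midpoint for the reverse composition inclusion.

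\textbf{Routine conditions.}
\begin{itemize}
\item[(D1)] Since \(\mathsf T_C,\mathsf 4_C,\mathsf 5_C\) are Sahlqvist, each \(R_C\) is reflexive, transitive and Euclidean, hence an equivalence relation.
\item[(D2)] \(R_\emptyset\) is the identity. If \(XR_\emptyset Y\) and \(\psi\in X\), then \([\emptyset]\psi\in X\) by \(\mathsf I_\emptyset\), so \(\psi\in Y\). Thus \(X\subseteq Y\), and maximality forces \(X=Y\). Reflexivity gives the converse.
\item[(D3)] If \(C\subseteq D\) and \(XR_CY\), then for every \([D]\psi\in X\) the box form \(\mathsf M^{\Box}_{CD}\) of Lemma~\ref{lem:dual-forms} gives \([C]\psi\in X\), hence \(\psi\in Y\). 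So \(XR_DY\).
\item[Final clause.] If \(XR_CY\) and \([C]\chi\in X\), then \([C][C]\chi\in X\) by \(\mathsf 4_C\), so \([C]\chi\in Y\). The converse uses symmetry of \(R_C\).
\end{itemize}

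\textbf{Forward composition.} Suppose \(XR_CYR_DZ\) and \([C\cup D]\psi\in X\). By \(\mathsf U^{\Box}_{CD}\), \([C][D]\psi\in X\). Then \([D]\psi\in Y\), and so \(\psi\in Z\). Hence \(R_C\circ R_D\subseteq R_{C\cup D}\).

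\textbf{Midpoint (main obstacle).} Assume \(XR_{C\cup D}Z\). Put
\[
\Gamma=\{\psi:[C]\psi\in X\}\cup\{\langle D\rangle\zeta:\zeta\in Z\}.
\]
Any maximal consistent \(Y\supseteq\Gamma\) satisfies \(XR_CY\) directly. It also satisfies \(YR_DZ\): if \([D]\theta\in Y\) but \(\theta\notin Z\), then \(\neg\theta\in Z\), so \(\langle D\rangle\neg\theta\in Y\), contradicting consistency of \(Y\).

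It remains to show that \(\Gamma\) is consistent, which I do by contradiction. Both families are closed under conjunction: the first by the K axiom, the second because \(Z\) is closed under conjunction and \(\langle D\rangle(\zeta_1\wedge\zeta_2)\to\langle D\rangle\zeta_1\wedge\langle D\rangle\zeta_2\). So an inconsistency reduces to a single pair with \(\vdash\psi\to[D]\neg\zeta\), where \([C]\psi\in X\) and \(\zeta\in Z\).
\begin{enumerate}
\item Necessitation and \(\mathsf K_C\) give \([C]\psi\to[C][D]\neg\zeta\).
\item Hence \([C][D]\neg\zeta\in X\).
\item By \(\mathsf U^{\Box}_{CD}\), \([C\cup D]\neg\zeta\in X\).
\item Since \(XR_{C\cup D}Z\), this gives \(\neg\zeta\in Z\), contradicting \(\zeta\in Z\).
\end{enumerate}
Therefore \(\Gamma\) is consistent, and Lindenbaum's lemma yields the required midpoint \(Y\). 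This establishes \(R_C\circ R_D=R_{C\cup D}\), i.e.\ (D4), so \((W,\{R_C\})\in\mathsf{Dev}(N)\).
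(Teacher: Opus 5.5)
Your proof is correct and follows the paper's argument essentially step for step. You use the same midpoint set \(\Gamma=\{\psi:[C]\psi\in X\}\cup\{\langle D\rangle\zeta:\zeta\in Z\}\) and the same route to a contradiction through \([C][D]\neg\zeta\in X\), the box form of \(\mathsf U_{CD}\), and the edge \(XR_{C\cup D}Z\). The only differences are cosmetic. You obtain \(R_\emptyset=\mathrm{Id}\) from maximality rather than from symmetry. You do not split off the empty-label cases, which your general argument covers anyway. And your reduction to a single pair \((\psi,\zeta)\) silently absorbs the paper's separate \(s=0\) case by allowing \(\zeta=\top\).
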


\begin{proof}
We verify the four frame clauses of \(\mathsf{Dev}(N)\) and then the auxiliary invariance claim.

\emph{S5 components.} For each fixed coalition \(C\), the schemes \(\mathsf T_C,\mathsf 4_C,\mathsf 5_C\), together with normality, give the usual canonical S5 argument. Hence \(R_C\) is reflexive, transitive, and Euclidean, and therefore an equivalence relation.

\emph{Empty coalition.} The scheme \(\mathsf I_\emptyset\) gives \([\emptyset]\alpha\leftrightarrow\alpha\). Suppose \(XR_\emptyset Y\). If \(\alpha\in X\), then \([\emptyset]\alpha\in X\), so \(\alpha\in Y\) by the definition of \(R_\emptyset\). Thus \(X\subseteq Y\). Since \(R_\emptyset\) is symmetric by the S5 step, the same argument gives \(Y\subseteq X\). Therefore \(X=Y\), so \(R_\emptyset=\mathrm{Id}_W\).

\emph{Inclusion.} If \(C\subseteq D\), the dual of \(\mathsf M_{CD}\) is \([D]\alpha\to[C]\alpha\). Let \(XR_CY\) and assume \([D]\alpha\in X\). Then \([C]\alpha\in X\), hence \(\alpha\in Y\). This is exactly \(XR_DY\). Thus \(R_C\subseteq R_D\).

\emph{Forward composition.} Let \(XR_CY\) and \(YR_DZ\). To show \(XR_{C\cup D}Z\), take any \([C\cup D]\alpha\in X\). The box form of \(\mathsf U_{CD}\) yields \([C][D]\alpha\in X\). From \(XR_CY\) we get \([D]\alpha\in Y\), and from \(YR_DZ\) we get \(\alpha\in Z\). Hence \(R_C\circ R_D\subseteq R_{C\cup D}\).

\emph{Reverse composition.} Assume \(XR_{C\cup D}Z\). We must construct a midpoint \(Y\) such that \(XR_CY\) and \(YR_DZ\). The cases in which one label is empty are forced by identity. If \(C=\emptyset\), take \(Y=X\); then \(XR_\emptyset X\) and, because \(C\cup D=D\), \(XR_DZ\). If \(D=\emptyset\), take \(Y=Z\); then \(XR_CZ\) and \(ZR_\emptyset Z\).

Assume now that \(C,D\neq\emptyset\). The desired midpoint must contain every formula forced by the \([C]\)-theory of \(X\), and it must be compatible with sending its \([D]\)-theory into \(Z\). Define
\[
   \Gamma=
   \{\theta:[C]\theta\in X\}
   \cup
   \{\langle D\rangle\eta: \eta\in Z\}.
\]
The second half of \(\Gamma\) is the canonical way to protect the future \(R_D\)-successor condition: if a candidate midpoint contained \([D]\beta\) while \(\beta\notin Z\), then \(Z\) would contain \(\neg\beta\), so the required diamond \(\langle D\rangle\neg\beta\) would create a contradiction.

We prove that \(\Gamma\) is consistent in a way that exposes exactly where \(\mathsf U_{CD}\) is used. Suppose, for contradiction, that \(\Gamma\) is inconsistent. Then there are finitely many formulas \(\theta_1,\ldots,\theta_r\) with \([C]\theta_i\in X\) and finitely many formulas \(\eta_1,\ldots,\eta_s\in Z\) such that
\[
   \vdash
   (\theta_1\wedge\cdots\wedge\theta_r)
   \to
   \neg(\langle D\rangle\eta_1\wedge\cdots\wedge\langle D\rangle\eta_s).
\]
Put \(\theta=\bigwedge_i\theta_i\), and let empty conjunctions be \(\top\). If \(s=0\), the displayed theorem is \(\vdash\neg\theta\). Since \([C]\theta\in X\) follows from the formulas \([C]\theta_i\in X\) by normal modal reasoning, this contradicts the consistency of \(X\). Hence \(s>0\). Put \(\eta=\bigwedge_{j=1}^s\eta_j\). Because each \(\eta_j\in Z\) and \(Z\) is maximal consistent, \(\eta\in Z\).

The formula \(\eta\to\eta_j\) is a theorem for each \(j\). By necessitation and \(\mathsf K_D\), \([D](\eta\to\eta_j)\) gives \(\langle D\rangle\eta\to\langle D\rangle\eta_j\). Therefore
\[
   \vdash \langle D\rangle\eta
   \to
   \bigwedge_{j=1}^s\langle D\rangle\eta_j .
\]
Combining this theorem with the finite inconsistency display yields \(\vdash\theta\to[D]\neg\eta\). Since \([C]\theta\in X\), normality gives \([C][D]\neg\eta\in X\). Lemma~\ref{lem:dual-forms} converts this to
\[
   [C\cup D]\neg\eta\in X.
\]
Now use the assumed edge \(XR_{C\cup D}Z\): by the definition of the canonical relation, \(\neg\eta\in Z\). This contradicts \(\eta\in Z\). Thus \(\Gamma\) is consistent.

Extend \(\Gamma\) to a maximal consistent set \(Y\). If \([C]\alpha\in X\), then \(\alpha\in\Gamma\subseteq Y\), so \(XR_CY\). To show \(YR_DZ\), let \([D]\beta\in Y\). If \(\beta\notin Z\), then \(\neg\beta\in Z\), so \(\langle D\rangle\neg\beta\in\Gamma\subseteq Y\), contradicting \([D]\beta\in Y\). Hence \(\beta\in Z\). Thus \(YR_DZ\), and \(R_{C\cup D}\subseteq R_C\circ R_D\).

Combining the two inclusions gives \(R_C\circ R_D=R_{C\cup D}\). Finally, if \(XR_CY\), then \([C]\chi\in X\) implies \([C][C]\chi\in X\) by \(\mathsf 4_C\), hence \([C]\chi\in Y\). The reverse direction follows by symmetry of \(R_C\). Thus \([C]\)-theories are constant on \(R_C\)-classes.
\end{proof}

\paragraph{Finite certificates and canonical completeness.}
The completeness theorem below is proved by the canonical model. Section~\ref{sec:complexity} studies a separate finite task in which a candidate \(\mathsf{Dev}(N)\)-model is submitted explicitly. In that task the verifier scans the final relation tables, including the exact equation \(F_C\circ F_D=F_{C\cup D}\), rather than constructing a quotient model.

\begin{theorem}[Soundness and completeness]
\label{thm:completeness}
For the pure coalition-modal language over the fixed finite set of coalition labels \(2^N\), the following are equivalent for every formula \(\varphi\):
\begin{enumerate}[label=(\roman*)]
  \item \(\mathsf H_{\mathrm{bp}}\vdash\varphi\);
  \item \(\varphi\) is valid on every frame in \(\mathsf{Dev}(N)\).
\end{enumerate}
Equivalently, every \(\mathsf H_{\mathrm{bp}}\)-consistent pure modal formula is satisfiable in the canonical \(\mathsf{Dev}(N)\)-model. Social-choice atoms are handled later by choosing a valuation on concrete biprofile models.
\end{theorem}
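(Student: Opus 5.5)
The plan is to prove the two directions separately. Soundness will be handled by induction on derivations, and completeness by the standard canonical-model argument, with the frame-theoretic work taken over from Lemma~\ref{lem:canonical-factorization}.

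\emph{Soundness.} Fix a frame \((S,\{E_C\})\in\mathsf{Dev}(N)\) and an arbitrary valuation. I will check that every axiom scheme is valid and that the rules preserve validity. Tautologies, modus ponens, necessitation and \(\mathsf K_C\) are valid on every Kripke frame. The schemes \(\mathsf T_C,\mathsf 4_C,\mathsf 5_C\) hold because each \(E_C\) is an equivalence relation (D1). \(\mathsf I_\emptyset\) holds because \(E_\emptyset\) is the identity (D2). \(\mathsf M_{CD}\) holds because \(C\subseteq D\) gives \(E_C\subseteq E_D\) (D3): a \(C\)-witness for \(\varphi\) is also a \(D\)-witness. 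For \(\mathsf U_{CD}\), the truth clause of \(\langle C\rangle\langle D\rangle\varphi\) at \(s\) says there is \(t\) with \(s(E_C\circ E_D)t\) and \(t\models\varphi\). By (D4) this is exactly the clause for \(\langle C\cup D\rangle\varphi\), so both directions of the biconditional hold. None of these checks uses coordinates, which matches the abstract setting of the frame class.

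\emph{Completeness.} I will show the contrapositive: if \(\mathsf H_{\mathrm{bp}}\nvdash\varphi\), then \(\neg\varphi\) is consistent. Lindenbaum's lemma extends it to a maximal consistent set \(X_0\) in the canonical model \(\mathfrak M^c\) of Lemma~\ref{lem:canonical-factorization}. Next I need the Truth Lemma, \(\mathfrak M^c,X\models\psi\iff\psi\in X\), proved by induction on \(\psi\). The only nontrivial case is \(\langle C\rangle\psi\). Its right-to-left direction is the usual existence lemma: \(\{\theta:[C]\theta\in X\}\cup\{\psi\}\) is consistent, which uses only \(\mathsf K_C\) and necessitation, and any maximal extension is an \(R_C\)-successor. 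This is entirely standard for normal polymodal logics, since the language has one primitive box per coalition and \(2^N\) is finite and fixed.

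Then \(X_0\) refutes \(\varphi\) in \(\mathfrak M^c\). Lemma~\ref{lem:canonical-factorization} already establishes that the canonical frame lies in \(\mathsf{Dev}(N)\), so \(\varphi\) fails on a \(\mathsf{Dev}(N)\)-frame. The "equivalently" clause is the same argument applied to a consistent formula: it lies in some maximal consistent set and is true there by the Truth Lemma.

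The genuine obstacle, the midpoint for \(R_{C\cup D}\subseteq R_C\circ R_D\), has already been discharged in Lemma~\ref{lem:canonical-factorization}. For this theorem the only point needing care is therefore to check that the Truth Lemma uses the same canonical relations \(R_C\) whose \(\mathsf{Dev}(N)\) properties the lemma verified, so that no filtration or quotient is needed.
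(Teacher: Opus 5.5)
Your proposal is correct and follows the paper's proof. Soundness is checked clause by clause from (D1)--(D4), and completeness goes through Lindenbaum, the canonical truth lemma, and Lemma~\ref{lem:canonical-factorization} to place the canonical frame in \(\mathsf{Dev}(N)\); you only make explicit the existence lemma and the rule-preservation checks that the paper leaves implicit.
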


\begin{proof}
Soundness is checked clause by clause. The schemes \(\mathsf K_C,\mathsf T_C,\mathsf 4_C,\mathsf 5_C\) are valid because each \(E_C\) is an equivalence relation. The scheme \(\mathsf I_\emptyset\) is valid because \(E_\emptyset\) is identity. If \(C\subseteq D\), then \(E_C\subseteq E_D\), so \(\mathsf M_{CD}\) is valid. Finally, \(E_C\circ E_D=E_{C\cup D}\) gives both directions of \(\mathsf U_{CD}\).

For completeness, use the canonical model of \(\mathsf H_{\mathrm{bp}}\). Lemma~\ref{lem:canonical-factorization} proves the required frame facts: the canonical relations are equivalences, \(R_\emptyset\) is identity, they are monotone in the coalition label, and their compositions factor exactly by union. Thus the canonical frame belongs to \(\mathsf{Dev}(N)\). If \(\varphi\) is not derivable, then \(\neg\varphi\) extends to a maximal consistent set \(X\). The canonical truth lemma gives \(\mathfrak M^c,X\not\models\varphi\). Therefore validity on \(\mathsf{Dev}(N)\) implies derivability.
\end{proof}

\section{Manipulation as modal definability}

Sections~4--7 are a translation and audit layer. The point of the translation is not only to rename classical social-choice notions, but to make the moving parts of a strategic event separately checkable. A manipulation claim becomes a modal formula; a satisfying instance becomes a typed edge record \((R,P,C,Q,x,y)\); and each later model change is evaluated by asking which field of that record changed.

The audit map used below is as follows. Strategy-proofness becomes the absence of singleton formulas \(\mu_i\), and group strategy-proofness becomes the absence of \(\mu_C\) for all non-empty coalitions. The field \(R\) stores the welfare comparison; \(P,Q\) store the current and deviated reports; \(C\) stores the allowed coordinates of change; and \(x,y\) store the two rule values. A domain restriction normally deletes the admissibility of \(Q\) or the edge \((R,P)E_C(R,Q)\). An off-domain extension can add a new admissible report and a new rule-table row. A public deletion can leave the displayed endpoint edge in place while destroying the factorization conditions needed for a \(\mathsf{Dev}(N)\)-frame. This is the verification advantage of the true/report language.

For compactness, write
\[
   y\succ_i x := p^i_{yx},
   \qquad
   y\succeq_i x := (y=x)\vee p^i_{yx}.
\]
For a non-empty coalition \(C\), define
\[
  \mathrm{Imp}_C(y,x)=
  \Big(\bigwedge_{i\in C} y\succeq_i x\Big)
  \wedge
  \Big(\bigvee_{i\in C} y\succ_i x\Big).
\]
The following formula separates three roles. First, \(o_x\) records the current outcome. Second, \(\mathrm{Imp}_C(y,x)\) evaluates, using the unchanged true profile \(R\), whether \(y\) is a weak Pareto improvement for the deviating coalition and a strict improvement for at least one of its members. Third, \(\langle C\rangle o_y\) asks whether that outcome is reachable by changing only the coalition's reports.

The coalitional manipulation formula is
\[
  \mu_C=
  \bigvee_{x,y\in A}
  \big(o_x\wedge \mathrm{Imp}_C(y,x)\wedge\langle C\rangle o_y\big).
\]
For singleton coalitions write \(\mu_i\) for \(\mu_{\{i\}}\).

\begin{remark}[Operational reading]
The modality \(\langle C\rangle\) may be read as existential quantification over admissible replacements of the report coordinates in \(C\), with the true profile fixed. Formally,
\[
   \langle C\rangle o_y
   \quad\text{means}\quad
   \exists q_C\in U^C\; f(P[C:=q_C])=y,
\]
where \(U\) is the relevant report domain. This remark is only an operational gloss on the Kripke clause; the proofs use the relations \(E_C\).
\end{remark}

\begin{definition}[Strategy-proofness]
A resolute social choice function \(f\) is \emph{strategy-proof} if, for every true profile \(R\), agent \(i\), and report \(Q_i\),
\[
  f(R)\succeq_i^R f(Q_i,R_{-i}).
\]
It is \emph{strongly group strategy-proof} if there are no non-empty \(C\subseteq N\), true profile \(R\), and joint report \(Q_C\) such that, with \(x=f(R)\) and \(y=f(Q_C,R_{-C})\),
\[
  y\succeq_i^R x\ \text{ for all }i\in C,
  \qquad
  y\succ_j^R x\ \text{ for some }j\in C .
\]
\end{definition}

\paragraph{Preference-domain and group convention.}
Unless explicitly stated otherwise, the paper works with strict linear orders. The weak symbols \(\succeq_i^R\) used for coalitional deviations are induced by strict orders by adding equality of alternatives: \(y\succeq_i^R x\) means \(y=x\) or \(y\succ_i^R x\). We use ``strong group strategy-proofness'' for the no-weak-Pareto-improving-deviation convention: every member of the deviating coalition is weakly better off and at least one member is strictly better off. If an all-strict convention is desired, only the improvement atom \(\mathrm{Imp}_C\) changes.

\begin{lemma}[Deviation witness]\label{lem:witness}
Let \(C\ne\varnothing\) and let \(s=(R,R)\) be sincere. Then
\[
   \mathfrak M_f,s\models \mu_C
\]
if and only if coalition \(C\) has a joint misreport from \(R\) that weakly benefits every member of \(C\) and strictly benefits at least one member.
\end{lemma}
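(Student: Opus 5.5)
The plan is to unfold the semantics of \(\mu_C\) at the sincere state \(s=(R,R)\) one conjunct at a time, and check that each clause matches one ingredient of the social-choice statement.

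First, \(\mathfrak M_f,s\models\mu_C\) holds iff there are alternatives \(x,y\in A\) with three properties:
\begin{enumerate}[label=(\roman*)]
  \item \(s\models o_x\);
  \item \(s\models\mathrm{Imp}_C(y,x)\);
  \item \(s\models\langle C\rangle o_y\).
\end{enumerate}
Each conjunct is then translated using the definition of the biprofile model.

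\begin{itemize}
  \item By the valuation clause for outcome atoms, \(o_x\) holds at \((R,R)\) iff \(f(R)=x\). So \(x\) is forced to be the sincere outcome.
  \item By the clause for \(p^i_{yx}\), together with the convention \(y\succeq_i x:=(y=x)\vee p^i_{yx}\) (equality of alternatives being a fixed Boolean constant), \(\mathrm{Imp}_C(y,x)\) holds at \((R,R)\) iff \(y\succeq_i^R x\) for all \(i\in C\) and \(y\succ_j^R x\) for some \(j\in C\). This uses only the true coordinate \(R\), and \(C\neq\varnothing\) keeps the disjunction meaningful.
  \item By the modal clause, \(\langle C\rangle o_y\) holds at \((R,R)\) iff there is \(Q\equiv_{-C}R\) with \(f(Q)=y\). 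Such a \(Q\) has the form \((Q_C,R_{-C})\) for a joint report \(Q_C\).
\end{itemize}

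For the forward direction, I would take the witnessing \(x,y,Q\) and read off the joint misreport \(Q_C\). Its outcome \(y=f(Q_C,R_{-C})\) weakly improves every member of \(C\) over \(x=f(R)\) and strictly improves some member.

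For the converse, given a beneficial joint misreport \(Q_C\), I would set \(x=f(R)\) and \(y=f(Q_C,R_{-C})\), and check the three conjuncts.

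The only step needing care is the modal conjunct. The point to stress is that the successor state is \((R,Q)\), not \((Q,Q)\), because \(E_C\) never changes the true coordinate. This is why the welfare comparison \(\mathrm{Imp}_C\) can be evaluated at \(s\) itself rather than after the move. Since \(\mathrm{Imp}_C\) sits outside the diamond, this is immediate. If desired, one can also note that the true-preference atoms are constant along \(E_C\)-edges, so they could equally be read at the successor.

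A minor point is that \(y=x\) is allowed by the formula. In that case \(\mathrm{Imp}_C(x,x)\) is false, because it needs some strict \(p^j_{xx}\), and strict orders are irreflexive. So no spurious witnesses arise, and the definitional statement "weakly benefits every member and strictly benefits one" already excludes this case.
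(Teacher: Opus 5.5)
Your proposal is correct and follows essentially the same route as the paper: unfold \(\mu_C\) at \((R,R)\) into its three conjuncts \(o_x\), \(\mathrm{Imp}_C(y,x)\), and \(\langle C\rangle o_y\), and read each one as the sincere outcome, the weak-Pareto/strict improvement at the fixed true profile, and the existence of some \(Q\equiv_{-C}R\) with \(f(Q)=y\). Your extra checks go slightly beyond the paper's terse proof but change nothing in the argument: the successor is \((R,Q)\) rather than \((Q,Q)\), and the case \(y=x\) is excluded by irreflexivity.
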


\begin{proof}
If \(\mu_C\) holds, then for some \(x,y\in A\), \(o_x\) holds at \((R,R)\), \(\mathrm{Imp}_C(y,x)\) holds at \((R,R)\), and \(\langle C\rangle o_y\) holds. Hence \(x=f(R)\), all members of \(C\) weakly prefer \(y\) to \(x\) according to \(R\), one member strictly prefers \(y\), and some report profile \(Q\equiv_{-C}R\) satisfies \(f(Q)=y\). This is exactly a profitable joint misreport. The converse reads the same conditions from the definition of such a misreport into the three conjuncts of \(\mu_C\).
\end{proof}

\begin{theorem}[Exact modal characterization]\label{thm:exact}
For every finite resolute social choice function \(f\):
\begin{enumerate}[label=(\roman*)]
  \item \(f\) is strategy-proof iff
  \[
     \mathfrak M_f,(R,R)\models\bigwedge_{i\in N}\neg\mu_i
  \]
  for every \(R\in\mathcal L(A)^N\).
  \item \(f\) is strongly group strategy-proof iff
  \[
  \mathfrak M_f,(R,R)\models
  \bigwedge_{\varnothing\ne C\subseteq N}\neg\mu_C
  \]
  for every \(R\in\mathcal L(A)^N\).
\end{enumerate}
\end{theorem}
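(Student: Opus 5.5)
The plan is to derive both parts of Theorem~\ref{thm:exact} directly from Lemma~\ref{lem:witness}, quantified over all sincere states. No modal proof theory is needed here. The work lies in aligning each classical definition with the corresponding instance of the deviation-witness condition.

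For part (i), fix a true profile \(R\) and an agent \(i\). Apply Lemma~\ref{lem:witness} with \(C=\{i\}\). For a singleton coalition, \(\mathrm{Imp}_{\{i\}}(y,x)\) reduces to \((y\succeq_i x)\wedge(y\succ_i x)\), which is equivalent to \(y\succ_i x\). So \(\mathfrak M_f,(R,R)\models\mu_i\) holds iff there is a report \(Q\equiv_{-\{i\}}R\), that is \(Q=(Q_i,R_{-i})\), with \(f(Q)\succ_i^R f(R)\).

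Because the orders are strict linear orders, \(\succeq_i^R\) is total. Hence the failure of \(f(R)\succeq_i^R f(Q_i,R_{-i})\) is exactly \(f(Q_i,R_{-i})\succ_i^R f(R)\). Negating, we get that \(\neg\mu_i\) holds at \((R,R)\) iff no profitable unilateral misreport exists for \(i\) at \(R\). Taking the conjunction over \(i\in N\) and quantifying over all \(R\) gives strategy-proofness.

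For part (ii), the argument is the same applied to every non-empty \(C\). Lemma~\ref{lem:witness} states that \(\mu_C\) holds at \((R,R)\) iff some joint report \(Q_C\) gives \(y=f(Q_C,R_{-C})\) with \(y\succeq_i^R x\) for all \(i\in C\) and \(y\succ_j^R x\) for some \(j\in C\), where \(x=f(R)\). This is verbatim the forbidden configuration in the definition of strong group strategy-proofness. Negating, conjoining over all non-empty \(C\), and quantifying over \(R\) yields the equivalence.

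The only point needing care, which I expect to be the main obstacle, is bookkeeping. First, the existential report \(Q\) in the modal clause ranges over all of \(\mathcal L(A)^N\) with \(Q\equiv_{-C}R\), which matches the unrestricted report domain in the definition. Second, the weak comparison \(y\succeq_i x\) is encoded by \((y=x)\vee p^i_{yx}\) with the equality disjunct decided at the metalevel. I would state both explicitly, and note that the evaluation is performed only at sincere states \((R,R)\), so that the current outcome atom \(o_x\) reads \(f(R)\).
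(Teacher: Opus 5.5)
Your proposal is correct and takes the same approach as the paper, which simply applies Lemma~\ref{lem:witness} to singleton coalitions for (i) and to all non-empty coalitions for (ii). Your explicit use of totality of strict orders, to match \(\neg\mu_i\) with the weak-preference form \(f(R)\succeq_i^R f(Q_i,R_{-i})\) in the definition of strategy-proofness, fills in a step the paper leaves implicit.
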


\begin{proof}
Part (i) is Lemma~\ref{lem:witness} for singleton coalitions. Part (ii) is the same lemma for all non-empty coalitions.
\end{proof}

\paragraph{Reusable witness object.}
A successful manipulation is stored as
\[
   W=(R,P,C,Q,x,y),
\]
where \((R,P)E_C(R,Q)\), \(x=f(P)\), \(y=f(Q)\), and \(\mathrm{Imp}_C(y,x)\) is evaluated at \(R\). Domain restriction changes whether \(Q\) is an admissible successor, off-domain extension can add the corresponding rule row, and public deletion can preserve the endpoints while destroying factor closure of the surrounding fibre. The record is therefore an audit object as well as a truth witness.

\section{Modal translations of social-choice axioms and the Gibbard--Satterthwaite boundary}

Define
\[
  \mathrm{Par}=
  \bigwedge_{x,y\in A}
  \left(
    \left(\bigwedge_{i\in N}y\succ_i x\right)
    \to \neg o_x
  \right),
\]
which says that no unanimously dominated alternative is chosen. Surjectivity is expressed by
\[
  \mathrm{Onto}=\bigwedge_{x\in A}\langle N\rangle o_x.
\]
For dictatorship, use the point formula
\[
  \mathrm{Dict}_i=\bigwedge_{x\in A}(t^i_x\to o_x).
\]
It says that, at the current sincere profile, the chosen alternative is agent \(i\)'s true top. Global dictatorship is the meta-level condition that the same \(i\) satisfies this formula at every sincere profile.

\begin{proposition}[Correctness of the basic translations]\label{prop:basic}
Let \(f:\mathcal L(A)^N\to A\) be finite and resolute.
\begin{enumerate}[label=(\roman*)]
  \item \(f\) is Pareto efficient iff \(\mathfrak M_f,(R,R)\models\mathrm{Par}\) for every \(R\in\mathcal L(A)^N\).
  \item \(f\) is surjective iff \(\mathfrak M_f,s\models\mathrm{Onto}\) for every state \(s\in S_f\).
  \item Agent \(i\) is a dictator for \(f\) iff \(\mathfrak M_f,(R,R)\models\mathrm{Dict}_i\) for every \(R\in\mathcal L(A)^N\).
\end{enumerate}
\end{proposition}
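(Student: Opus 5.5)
The plan is to prove the three parts of Proposition~\ref{prop:basic} separately. In each part we unfold the valuation clauses of \(\mathfrak M_f\) at the relevant states and compare the result with the classical definition. No proof theory is needed; only the truth conditions of the atoms and of \(\langle N\rangle\) are used. Throughout we use that at a sincere state \((R,R)\) the atom \(o_x\) holds iff \(f(R)=x\), since the report coordinate is \(R\), and that \(p^i_{yx}\) holds iff \(y\succ_i^R x\).

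For (i), we first fix the reading of Pareto efficiency: \(f(R)\) is never an alternative \(x\) such that some \(y\) has \(y\succ_i^R x\) for all \(i\). At \((R,R)\) the antecedent \(\bigwedge_i p^i_{yx}\) is exactly unanimous domination of \(x\) by \(y\) under \(R\), and \(\neg o_x\) says \(f(R)\ne x\). So \(\mathrm{Par}\) holds at \((R,R)\) iff no unanimously dominated alternative is chosen at \(R\). Quantifying over all \(R\) gives the equivalence. We will note that the case \(x=y\) is harmless, because \(p^i_{xx}\) is false for strict orders.

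For (ii), we use Lemma~\ref{lem:algebra}(c): \(E_N\) is universal on the fibre \(\{R\}\times\mathcal L(A)^N\). Hence \(\langle N\rangle o_x\) holds at \((R,P)\) iff some \(Q\) has \(f(Q)=x\), which is independent of \(R\) and \(P\). Consequently \(\mathrm{Onto}\) holds at one state iff it holds at every state iff \(f\) is surjective.

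For (iii), \(t^i_x\) holds at \((R,R)\) iff \(x\) is the \(R_i\)-top. Exactly one such \(x\) exists, so \(\mathrm{Dict}_i\) at \((R,R)\) reduces to \(f(R)=\mathrm{top}(R_i)\). Quantifying over all \(R\) gives the dictatorship definition. The main point of care, rather than a real obstacle, is making the classical definitions in (i) and (iii) precise so that they match the formula's quantifier structure.
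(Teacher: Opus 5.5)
Your proposal is correct and follows the same route as the paper. For (i) and (iii) you unfold the atoms at sincere states \((R,R)\), and for (ii) you use that \(\langle N\rangle\) ranges over every report profile in the fibre while the outcome ignores the true coordinate. Your version spells out details the paper leaves implicit, namely the harmless \(x=y\) conjuncts of \(\mathrm{Par}\) and the uniqueness of the true top in \(\mathrm{Dict}_i\).
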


\begin{proof}
At sincere states the true and reported profiles coincide, so \(\mathrm{Par}\) and \(\mathrm{Dict}_i\) reproduce the usual definitions. For \(\mathrm{Onto}\), the grand coalition can change every reported order while the true profile is irrelevant to the outcome. Hence \(\langle N\rangle o_x\) is true at every state exactly when some report profile selects \(x\).
\end{proof}

\begin{theorem}[Coordinate-explicit translation of Gibbard--Satterthwaite]\label{thm:gs}
Assume \(|A|\geq 3\) and the universal strict-order domain \(\mathcal L(A)^N\). There is no finite resolute social choice function \(f\) satisfying all three conditions below:
\begin{enumerate}[label=(\roman*),leftmargin=8mm]
  \item for every sincere state \((R,R)\),
  \[
     \mathfrak M_f,(R,R)\models \bigwedge_{i\in N}\neg\mu_i ;
  \]
  \item for every state \(s\in S_f\), \(\mathfrak M_f,s\models\mathrm{Onto}\);
  \item for every agent \(i\in N\) there is a sincere state \((R_i,R_i)\) such that
  \[
     \mathfrak M_f,(R_i,R_i)\models \neg\mathrm{Dict}_i .
  \]
\end{enumerate}
Equivalently, every onto resolute rule satisfying the individual no-manipulation scheme at all sincere states has a fixed dictator \(i\), meaning \(\mathfrak M_f,(R,R)\models\mathrm{Dict}_i\) for every \(R\in\mathcal L(A)^N\).
\end{theorem}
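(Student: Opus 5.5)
The plan is a pure translation argument: the classical Gibbard--Satterthwaite theorem does all the social-choice work. Theorem~\ref{thm:exact}(i), Proposition~\ref{prop:basic}(ii), and Proposition~\ref{prop:basic}(iii) convert each modal condition into its classical counterpart. I will not reprove the impossibility; as the introduction says, this is a translation, not a new proof.

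First, I translate condition (i). By Theorem~\ref{thm:exact}(i), the formula \(\bigwedge_{i}\neg\mu_i\) holds at every sincere state \((R,R)\) exactly when \(f\) is strategy-proof. Underneath, Lemma~\ref{lem:witness} applied to singleton coalitions gives the link. A satisfying instance of \(\mu_{\{i\}}\) at \((R,R)\) is an \(E_{\{i\}}\)-edge to some \((R,(Q_i,R_{-i}))\) with \(f(Q_i,R_{-i})\succ_i^R f(R)\). For singletons, \(\mathrm{Imp}_{\{i\}}(y,x)\) collapses to strict preference, because the disjunct requires \(y\succ_i x\).

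Second, I translate condition (ii). By Proposition~\ref{prop:basic}(ii), \(\mathrm{Onto}\) at every state is equivalent to surjectivity of \(f\). The key point is that \(E_N\) is universal on each true-profile fibre by Lemma~\ref{lem:algebra}(c). Hence \(\langle N\rangle o_x\) holds exactly when some report profile selects \(x\).

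Third, I translate condition (iii). Condition (iii) states that for every \(i\) there is a sincere profile where \(\mathrm{Dict}_i\) fails. By Proposition~\ref{prop:basic}(iii), this says that no agent \(i\) is a dictator. This uses the fact that exactly one \(t^i_x\) is true, so \(\mathrm{Dict}_i\) at \((R,R)\) says \(f(R)\) is \(i\)'s true top.

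Combining the three translations, a rule satisfying (i)--(iii) would be a strategy-proof, surjective, non-dictatorial resolute rule on \(\mathcal L(A)^N\) with \(|A|\ge3\). This contradicts Gibbard--Satterthwaite.

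The "equivalently" clause is the contrapositive. If (i) and (ii) hold, then (iii) must fail. Failure of (iii) means there is some \(i\) with \(\mathrm{Dict}_i\) true at every sincere state, which is a fixed dictator.

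The main care point is (iii), specifically the quantifier order. The meta-level "for every \(i\) there exists \(R_i\)" must match classical non-dictatorship, and \(\mathrm{Dict}_i\) must not be misread as a pointwise rather than a global condition. A second care point is confirming that the strict-order domain makes singleton \(\mathrm{Imp}\) coincide with classical manipulability. Both points reduce to the already-proved correctness results, so I expect no genuine obstacle beyond bookkeeping.
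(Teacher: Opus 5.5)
Your proposal is correct and follows the paper's proof essentially verbatim. Like the paper, you translate (i) via Theorem~\ref{thm:exact}, translate (ii) and (iii) via Proposition~\ref{prop:basic}, and then invoke the classical finite universal-domain Gibbard--Satterthwaite theorem. Your extra bookkeeping is accurate and only spells out what the paper leaves implicit: the singleton improvement atom collapsing to strict preference, \(E_N\) being universal on true-profile fibres, and the quantifier order in (iii) giving the contrapositive ``equivalently'' clause.
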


\begin{proof}
By Theorem~\ref{thm:exact}, condition (i) is strategy-proofness. By Proposition~\ref{prop:basic}, condition (ii) is surjectivity. Condition (iii) is the modal spelling of non-dictatorship at the meta level: it rules out each fixed agent \(i\) being top-selected at every sincere profile. The finite universal-domain Gibbard--Satterthwaite theorem then excludes the simultaneous satisfaction of (i)--(iii) \cite{Gibbard1973,Satterthwaite1975}.
\end{proof}

\noindent Theorem~\ref{thm:gs} fixes the interface between the classical theorem and the true/report semantics. The report-changing edge \(\langle i\rangle\), the outcome atom, and the true-preference comparison are separate components of the same state. This separation is used below for two audit results that are not statements of the classical theorem itself: new witnesses created by off-domain rows are characterized by Theorem~\ref{thm:boundary-safety}, and surviving witnesses under public deletion are governed by the factor-closure condition of Proposition~\ref{prop:update-safety}.

The domain assumption is mathematically active. Let \(\mathcal D\subseteq\mathcal L(A)^N\) be a set of admissible true profiles and let \(\mathcal M\subseteq\mathcal L(A)^N\) be a set of admissible report profiles, with \(\mathcal D\subseteq\mathcal M\) when truthful reporting is admissible. For \(f:\mathcal M\to A\), define
\[
   \mathfrak M_f^{\mathcal D,\mathcal M}
   =\big(\mathcal D\times\mathcal M,
      \{E_C^{\mathcal D,\mathcal M}\}_{C\subseteq N},V_f\big),
\]
where
\[
 (R,P)E_C^{\mathcal D,\mathcal M}(R',Q)
 \quad\text{iff}\quad
 R'=R,
 \quad Q\in\mathcal M,
 \quad Q\equiv_{-C}P .
\]

\paragraph{Restricted domains and the \(\mathsf{Dev}(N)\) laws.}
The truth clauses for manipulation formulas make sense for any displayed restricted graph \(\mathfrak M_f^{\mathcal D,\mathcal M}\). The completeness theorem, however, applies only to frames satisfying the \(\mathsf{Dev}(N)\) equations. If \(\mathcal M\) is a genuine coordinate product, such as \(\prod_i M_i\), then the restricted relations still satisfy exact union-composition. If \(\mathcal M\) is an arbitrary subset of \(\mathcal L(A)^N\), a mixed report needed to factor an \(E_{C\cup D}\)-edge may be absent. In that case the structure is still a useful labelled audit graph, but it should not be treated as a \(\mathsf{Dev}(N)\)-frame unless the factor-closure condition of Proposition~\ref{prop:update-safety} holds. This is the point at which restricted-domain model checking and \(\mathsf{Dev}(N)\)-validity part ways.

\begin{proposition}[Domain restriction as coordinate restriction]\label{prop:domain}
Let \(f:\mathcal M\to A\) be resolute and let \(\mathcal D\subseteq\mathcal M\subseteq\mathcal L(A)^N\). Then:
\begin{enumerate}[label=(\roman*)]
  \item \(f\) is strategy-proof on \((\mathcal D,\mathcal M)\) iff
  \[
  \mathfrak M_f^{\mathcal D,\mathcal M},(R,R)
  \models \bigwedge_{i\in N}\neg\mu_i
  \]
  for every \(R\in\mathcal D\).
  \item \(f\) is strongly group strategy-proof on \((\mathcal D,\mathcal M)\) iff
  \[
  \mathfrak M_f^{\mathcal D,\mathcal M},(R,R)
  \models
  \bigwedge_{\varnothing\ne C\subseteq N}\neg\mu_C
  \]
  for every \(R\in\mathcal D\).
\end{enumerate}
\end{proposition}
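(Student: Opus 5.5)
The plan is to repeat the proof of Theorem~\ref{thm:exact} inside the restricted model, after first restating Lemma~\ref{lem:witness} for \(\mathfrak M_f^{\mathcal D,\mathcal M}\). We cannot rely on the \(\mathsf{Dev}(N)\) equations, since \(\mathcal M\) need not be a coordinate product, so the argument will use only the truth clauses. The starting point is the meaning of strategy-proofness on \((\mathcal D,\mathcal M)\): for every true profile \(R\in\mathcal D\), every agent \(i\), and every report \(Q\in\mathcal M\) with \(Q\equiv_{-\{i\}}R\), we have \(f(R)\succeq_i^R f(Q)\). Strong group strategy-proofness is read the same way, with coalitions \(C\) and joint reports \(Q\in\mathcal M\) satisfying \(Q\equiv_{-C}R\). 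Since \(\mathcal D\subseteq\mathcal M\), the sincere state \((R,R)\) is a state of the model and \(f(R)\) is defined, so \(o_x\) holds there exactly when \(x=f(R)\).

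\emph{Restricted witness lemma.} For \(C\neq\varnothing\) and \(R\in\mathcal D\), I will show that \(\mathfrak M_f^{\mathcal D,\mathcal M},(R,R)\models\mu_C\) holds iff some \(Q\in\mathcal M\) with \(Q\equiv_{-C}R\) has \(f(Q)\) weakly better than \(f(R)\) for every member of \(C\) and strictly better for at least one. The disjunction in \(\mu_C\) is evaluated conjunct by conjunct.
\begin{enumerate}[label=(\alph*)]
\item The conjunct \(o_x\) fixes \(x=f(R)\).
\item The conjunct \(\mathrm{Imp}_C(y,x)\) is read from the true coordinate \(R\), which no edge changes. It therefore states the weak/strict Pareto condition for \(C\) at \(R\).
\item By the definition of \(E_C^{\mathcal D,\mathcal M}\), the conjunct \(\langle C\rangle o_y\) holds iff there is \(Q\in\mathcal M\) with \(Q\equiv_{-C}R\) and \(f(Q)=y\).
\end{enumerate}
For the converse, I take \(x=f(R)\) and \(y=f(Q)\) from a profitable deviation \(Q\) and select that disjunct of \(\mu_C\).

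\emph{Assembling the two parts.} Part (i) follows from the lemma applied to the singleton coalitions \(\{i\}\), since \(\neg\mu_i\) at \((R,R)\) says exactly that agent \(i\) has no admissible profitable misreport. The step requiring care is the equivalence between ``some \(Q\) has \(f(Q)\succ_i^R f(R)\)'' and the failure of \(f(R)\succeq_i^R f(Q)\). This uses that \(\succ_i^R\) is a strict linear order: the only weak improvement that is not strict is \(y=x\), and that case is excluded by the strict disjunct of \(\mathrm{Imp}_{\{i\}}\). Part (ii) is the lemma applied to every nonempty \(C\), quantified over all \(R\in\mathcal D\).

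I do not expect a real obstacle. The one point to watch is that \(\langle C\rangle\) quantifies only over \(\mathcal M\), so the modal reading and the domain-relative definition match exactly, with no factorization or composition property being used.
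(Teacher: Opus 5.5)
Your proposal is correct and follows the paper's route. The paper's proof says only that the argument of Lemma~\ref{lem:witness} carries over unchanged, with true profiles ranging over \(\mathcal D\) and admissible deviations over \(\mathcal M\), and then applies it to singleton and to all non-empty coalitions as in Theorem~\ref{thm:exact}. Your explicit check that, for strict orders, the failure of \(f(R)\succeq_i^R f(Q)\) is equivalent to \(f(Q)\succ_i^R f(R)\) supplies a step the paper leaves implicit.
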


\begin{proof}
The proof of Lemma~\ref{lem:witness} is unchanged except that true profiles range over \(\mathcal D\) and admissible deviations range over \(\mathcal M\).
\end{proof}

The next result is the main social-choice audit theorem used by the examples. It says that, once a rule is known on a restricted report domain, extending it to a larger report domain creates no hidden kind of manipulation: every new failure is witnessed by a boundary row, namely by a deviated report that was not previously admissible.

\begin{theorem}[Boundary-row criterion for off-domain extensions]\label{thm:boundary-safety}
Let \(\mathcal D\subseteq\mathcal M\subseteq\mathcal M'\subseteq\mathcal L(A)^N\), let \(f:\mathcal M\to A\), and let \(g:\mathcal M'\to A\) extend \(f\). Fix a non-empty coalition \(C\). Then \(g\) has a \(C\)-manipulation witness on \((\mathcal D,\mathcal M')\) which is not already a witness for \(f\) on \((\mathcal D,\mathcal M)\) iff there are
\[
   R\in\mathcal D,
   \qquad Q\in\mathcal M'\setminus\mathcal M,
   \qquad Q\equiv_{-C}R,
\]
and alternatives \(x,y\in A\) such that
\[
   x=f(R)=g(R),\qquad y=g(Q),\qquad \mathrm{Imp}_C(y,x) \text{ holds at } R.
\]
Consequently, if \(f\) is strongly group strategy-proof on \((\mathcal D,\mathcal M)\), then \(g\) is strongly group strategy-proof on \((\mathcal D,\mathcal M')\) iff no such boundary-row witness exists for any non-empty \(C\subseteq N\).
\end{theorem}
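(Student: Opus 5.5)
Throughout, I take a \(C\)-manipulation witness for \(g\) on \((\mathcal D,\mathcal M')\) to be a record \((R,R,C,Q,x,y)\) in the sense of the reusable witness object. Here \(R\in\mathcal D\), \(Q\in\mathcal M'\), \(Q\equiv_{-C}R\), \(x=g(R)\), \(y=g(Q)\), and \(\mathrm{Imp}_C(y,x)\) holds at \(R\). Anchoring at the sincere state is justified because Proposition~\ref{prop:domain} reduces strategy-proofness on \((\mathcal D,\mathcal M')\) to truth of \(\neg\mu_C\) at sincere states \((R,R)\), and Lemma~\ref{lem:witness} (with true profiles in \(\mathcal D\) and deviations in \(\mathcal M'\)) identifies satisfaction of \(\mu_C\) with the existence of such a record. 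The witnesses for \(f\) on \((\mathcal D,\mathcal M)\) are defined the same way, with \(Q\in\mathcal M\) and \(f\) in place of \(g\).

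The plan is a field-by-field comparison of the two record sets. Let \(w=(R,R,C,Q,x,y)\) be a \(g\)-witness. Since \(\mathcal D\subseteq\mathcal M\), we have \(R\in\mathcal M\), so \(x=g(R)=f(R)\) because \(g\) extends \(f\). Now split on the deviation field. If \(Q\in\mathcal M\), then \(y=g(Q)=f(Q)\), the edge \((R,R)E_C^{\mathcal D,\mathcal M}(R,Q)\) exists, and \(\mathrm{Imp}_C(y,x)\) is read from the same \(R\). Hence \(w\) is already an \(f\)-witness. So a new witness must have \(Q\in\mathcal M'\setminus\mathcal M\), and its fields are exactly the data listed in the statement. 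Conversely, given boundary data \(R,Q,x,y\), the record \((R,R,C,Q,x,y)\) is a \(g\)-witness. It is not an \(f\)-witness, since \(Q\notin\mathcal M\) means the edge field is absent in \(\mathfrak M_f^{\mathcal D,\mathcal M}\). This proves the equivalence.

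For the consequence, assume \(f\) is strongly group strategy-proof on \((\mathcal D,\mathcal M)\). Then by Proposition~\ref{prop:domain}(ii) there are no \(f\)-witnesses for any non-empty \(C\). Therefore every \(g\)-witness is new, and by the equivalence just proved it is a boundary-row witness. The converse is immediate, since a boundary-row witness is in particular a \(g\)-witness. Applying Proposition~\ref{prop:domain}(ii) to \(g\) then gives the stated iff.

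The main obstacle is not computational but definitional. The statement ``not already a witness'' must be interpreted at the level of witness records rather than at the level of truth of \(\mu_C\) at a state. At the state level, a state \((R,R)\) could gain a new witness while already carrying an old one, so the claim would be false if read that way. I would therefore fix the record reading explicitly at the start, and note that the consequence only needs the state-level version, which follows from the record version once the \(f\)-witness set is empty. The only other point needing care is the use of \(\mathcal D\subseteq\mathcal M\) to guarantee that the current-outcome field \(x\) is unchanged by the extension.
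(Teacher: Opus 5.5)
Your proposal is correct and follows the paper's proof essentially step for step. You use \(\mathcal D\subseteq\mathcal M\) to get \(g(R)=f(R)\), split on whether \(Q\in\mathcal M\), observe that a boundary record cannot be an \(f\)-witness because its edge field is absent, and obtain the consequence from the emptiness of the \(f\)-witness set. Your insistence on reading ``not already a witness'' at the level of tuples rather than states matches what the paper uses only implicitly (``the same tuple is already an \(f\)-witness''), and your remark that the state-level reading would break the converse direction is accurate.
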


\begin{proof}
Suppose first that \(g\) has a new \(C\)-manipulation witness at the sincere state \((R,R)\) in the larger model. By Lemma~\ref{lem:witness}, there is a report \(Q\in\mathcal M'\) with \(Q\equiv_{-C}R\), current outcome \(x=g(R)\), deviated outcome \(y=g(Q)\), and \(\mathrm{Imp}_C(y,x)\) true at \(R\). Since \(R\in\mathcal D\subseteq\mathcal M\) and \(g\) extends \(f\), we have \(g(R)=f(R)\). If \(Q\in\mathcal M\), then the same tuple is already an \(f\)-witness on \((\mathcal D,\mathcal M)\), because \(g(Q)=f(Q)\). Since the witness is new, \(Q\notin\mathcal M\). This gives the displayed boundary row.

Conversely, assume that such \(R,Q,x,y\) exist. In \(\mathfrak M_g^{\mathcal D,\mathcal M'}\), the state \((R,R)\) has outcome \(x\), the report \(Q\) is an admissible \(E_C\)-successor because \(Q\equiv_{-C}R\), and the successor has outcome \(y\). The improvement condition is evaluated at the unchanged true profile \(R\). Thus Lemma~\ref{lem:witness} gives a \(C\)-manipulation witness for \(g\). Because \(Q\notin\mathcal M\), this witness could not have existed in the restricted model for \(f\).

The final sentence follows by quantifying over all non-empty coalitions and observing that every larger-domain witness is either an old on-domain witness or one of the boundary-row witnesses just characterized.
\end{proof}

\noindent The theorem is a preservation/localization result, not a new form of the Gibbard--Satterthwaite or Moulin theorems. Those classical results speak about manipulation on a fixed domain. The boundary-row theorem assumes that the old domain has already been certified and asks a different question: which newly admitted report rows can change the certification status? Its answer is exact. Old rows cannot create new witnesses because the rule agrees there; rows outside \(\mathcal M'\) are not admissible; hence every new witness must be located at the boundary \(\mathcal M'\setminus\mathcal M\), with the welfare and outcome fields recorded in the same witness tuple. This is the operational value used in the case study below.

\begin{corollary}[Boundary audit for a proposed extension]\label{cor:boundary-audit}
Assume that \(f\) is strongly group strategy-proof on \((\mathcal D,\mathcal M)\). To verify that an extension \(g:\mathcal M'\to A\) remains strongly group strategy-proof on \((\mathcal D,\mathcal M')\), it is sufficient and necessary to check only rows \(Q\in\mathcal M'\setminus\mathcal M\) that are one-coalition deviations from some sincere \(R\in\mathcal D\). Each failing row returns a unique audit record of the form
\[
   (R,R,C,Q,f(R),g(Q)).
\]
Thus the extension audit is a boundary-row scan rather than a re-analysis of the classical theorem on the larger domain.
\end{corollary}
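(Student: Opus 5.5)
The plan is to derive Corollary~\ref{cor:boundary-audit} directly from the final sentence of Theorem~\ref{thm:boundary-safety}, quantifying over all non-empty coalitions \(C\subseteq N\). I will separate three claims: sufficiency, necessity, and the form and uniqueness of the audit record.

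\emph{Sufficiency.} Assume every row \(Q\in\mathcal M'\setminus\mathcal M\) with \(Q\equiv_{-C}R\), for some \(R\in\mathcal D\) and non-empty \(C\), has passed the check. Passing means that no \(x,y\) with \(x=f(R)\), \(y=g(Q)\) and \(\mathrm{Imp}_C(y,x)\) at \(R\) exist. Then no boundary-row witness exists for any \(C\). Since \(f\) is strongly group strategy-proof on \((\mathcal D,\mathcal M)\), the ``consequently'' clause of Theorem~\ref{thm:boundary-safety} makes \(g\) strongly group strategy-proof on \((\mathcal D,\mathcal M')\).

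\emph{Necessity.} Conversely, suppose some such row fails the check. The converse half of the theorem's proof then produces a \(C\)-witness for \(g\), via Lemma~\ref{lem:witness} and Proposition~\ref{prop:domain} applied to \(\mathfrak M_g^{\mathcal D,\mathcal M'}\). So no row of this kind can be skipped. Rows in \(\mathcal M\) need no check, because \(g\) agrees with \(f\) there and any witness on such a row would already be an \(f\)-witness, which certification excludes. Rows of \(\mathcal M'\setminus\mathcal M\) that are not one-coalition deviations of any sincere \(R\in\mathcal D\) also need no check. Every \(Q\) differs from a given \(R\) exactly on some set \(C_0\), and \(Q\equiv_{-C}R\) holds for every \(C\supseteq C_0\). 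I will therefore read ``one-coalition deviation'' as the existence of some sincere \(R\in\mathcal D\), with \(C\) ranging over non-empty coalitions. Only the row with \(Q=R\) gives \(C_0=\emptyset\), and it lies in \(\mathcal M\), so it is never a boundary row.

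\emph{Record form and uniqueness.} For a failing triple \((R,C,Q)\), the theorem gives \(x=f(R)=g(R)\) and \(y=g(Q)\). Both values are functions of \(R\) and \(Q\) because \(f\) and \(g\) are resolute. The record \((R,R,C,Q,f(R),g(Q))\) is therefore determined uniquely by the failing row together with the sincere state and the coalition label. I expect this to be the only delicate point. ``Unique'' has to be read as uniqueness of the record once \((R,C,Q)\) is fixed, not as a claim that each \(Q\) fails for only one pair \((R,C)\). I will state that reading explicitly, and then note that the scan ranges over a subset of \(\mathcal D\times(2^N\setminus\{\emptyset\})\times(\mathcal M'\setminus\mathcal M)\). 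That range is exactly the boundary-row scan described in the statement.
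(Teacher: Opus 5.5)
Your proposal is correct and follows the paper's route. The paper's proof cites Theorem~\ref{thm:boundary-safety} for two facts: every new larger-domain witness is a boundary row, and every boundary row is a witness. It then uses the certification of \(f\) to exclude failures on old rows, exactly as in your sufficiency and necessity steps. Your extra remarks are sound clarifications that the paper leaves implicit: every boundary row is trivially a deviation for some non-empty coalition, and ``unique'' should be read as uniqueness of the record once \((R,C,Q)\) is fixed.
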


\begin{proof}
Theorem~\ref{thm:boundary-safety} says that every new larger-domain witness is exactly such a boundary row, and that every such boundary row is a witness. Since \(f\) has no old-domain group manipulation, no other row can create a new failure.
\end{proof}

\section{Single-peaked projections and restricted-domain certificates}\label{sec:single-peaked}

The universal-domain theorem in Section~4 should not be read as a theorem about all natural domains. Fix a linear axis
\[
   \alpha=(a_1\prec_\alpha a_2\prec_\alpha\cdots\prec_\alpha a_m)
\]
on \(A\). A strict order \(R_i\) is single-peaked with respect to \(\alpha\) if, whenever \(a_j\prec_\alpha a_k\prec_\alpha a_\ell\),
\[
   a_j\succ_i a_k \Rightarrow a_k\succ_i a_\ell,
   \qquad
   a_\ell\succ_i a_k \Rightarrow a_k\succ_i a_j .
\]
Let \(\mathsf{SP}_\alpha(A)\) denote the set of such orders.

\begin{proposition}[Size and boundary of the single-peaked coordinate]\label{prop:sp-size}
For a fixed axis \(\alpha\) on \(m\) alternatives,
\[
   |\mathsf{SP}_\alpha(A)|=2^{m-1}.
\]
Hence the restricted biprofile state space
\(\mathsf{SP}_\alpha(A)^N\times\mathsf{SP}_\alpha(A)^N\) has size
\[
   2^{2n(m-1)},
\]
whereas the universal strict-order biprofile space has size \((m!)^{2n}\).
\end{proposition}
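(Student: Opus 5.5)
The plan is to count single-peaked orders by building each one from the bottom up. The key structural fact is that a strict order $R_i\in\mathsf{SP}_\alpha(A)$ has a least-preferred alternative that must be an endpoint of the axis, $a_1$ or $a_m$. Suppose instead the bottom were some interior $a_k$ with $a_j\prec_\alpha a_k\prec_\alpha a_\ell$. Since $a_k$ is last, $a_j\succ_i a_k$, and the first single-peakedness clause then forces $a_k\succ_i a_\ell$, which is a contradiction.

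Next I would check that deleting the bottom alternative preserves single-peakedness. If we remove the bottom endpoint, the remaining order restricted to the shortened axis (still contiguous, since an endpoint was removed) is single-peaked. This holds because the defining implications only involve triples, and any triple in the smaller set is a triple in the original.

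Conversely, appending an endpoint of the current axis segment as the new worst element preserves single-peakedness. To see this, take a triple containing the new bottom element $e$, say $e=a_j$ at the left end. The first clause, $a_j\succ_i a_k$, is vacuous because $e$ is last. The second clause needs $a_\ell\succ_i a_k\Rightarrow a_k\succ_i a_j$, which holds because everything beats $e$. The symmetric argument covers $e$ at the right end.

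Together these steps give a bijection between single-peaked orders on $m$ alternatives and pairs consisting of a choice of endpoint plus a single-peaked order on the remaining $m-1$ contiguous alternatives. Equivalently, reading the order from worst to best, each of the first $m-1$ positions makes a binary left/right choice, and the top is forced. Induction with base case $|\mathsf{SP}|=1$ for $m=1$ then gives $2^{m-1}$. The one point needing care is that the two endpoint choices are distinct when $m\ge 2$, so the two branches never collide. This is immediate, because they yield different bottom elements.

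The size statements follow by counting. The restricted space is a product of $2n$ coordinates, each drawn from $\mathsf{SP}_\alpha(A)$, so its size is $(2^{m-1})^{2n}=2^{2n(m-1)}$. The universal space has $(m!)^{n}$ true profiles and $(m!)^{n}$ report profiles, for $(m!)^{2n}$ states.

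The main obstacle is only the endpoint lemma in the first step. The rest is routine bookkeeping, so I would state that lemma carefully with both clauses of the definition.
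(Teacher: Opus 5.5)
Your proposal is correct and follows the same route as the paper. Both arguments build a single-peaked order from worst to best by choosing one of the two current axis endpoints at each of the first $m-1$ stages, and then take $2n$-fold products for the state-space counts; your endpoint, restriction, and appending lemmas simply make explicit the bijection that the paper's one-line construction leaves implicit.
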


\begin{proof}
A single-peaked order can be constructed from worst to best by repeatedly choosing one of the two remaining endpoint alternatives on the axis as the next worst alternative. There are two choices at each of the first \(m-1\) stages and one alternative left at the final stage, giving \(2^{m-1}\) orders. The state-space counts follow by taking \(N\)-fold products for the true and report coordinates.
\end{proof}

The endpoint-deletion coding also gives a computational reading. In the unrestricted domain, a bit string for a reported order must be checked against the axioms of a strict linear order. On a fixed single-peaked axis, by contrast, the \(m-1\) endpoint choices themselves generate exactly the admissible orders. Hence the domain restriction is visible both semantically, as a smaller Kripke coordinate, and computationally, as a simpler report encoding.

\paragraph{State-space reduction view.}
From a search perspective, the universal report generator branches over all permutations of \(A\). The single-peaked report generator branches only over left/right endpoint choices along the fixed axis. Thus the domain restriction does not merely add a filter after a large set of candidates has been generated; it prunes the state tree at generation time. For a coalition \(C\), the children produced by one deviation step fall from \((m!)^{|C|}\) to \(2^{(m-1)|C|}\), as the preceding count shows.

\begin{quote}
\small
\noindent\textsc{GenerateSinglePeakedReport}\((\alpha)\)
\begin{enumerate}[leftmargin=*,label=\arabic*.]
  \item Initialize \(S\leftarrow [a_1,\ldots,a_m]\) and \(order\leftarrow\emptyset\).
  \item For each \(t=1,\ldots,m-1\), choose a bit \(b_t\in\{0,1\}\).
  \item If \(b_t=0\), remove the left endpoint of \(S\) as the next worst alternative.
  \item If \(b_t=1\), remove the right endpoint of \(S\) as the next worst alternative.
  \item Append the remaining alternative as best and return the resulting order.
\end{enumerate}
\end{quote}
Every \((m-1)\)-bit string produces one admissible order and every admissible single-peaked order is produced once. A verifier can therefore quantify over endpoint bits directly, rather than over a permutation code together with a validity circuit.

Proposition~\ref{prop:sp-size} identifies the coordinate at which both the impossibility theorem and the finite checking question change. Taking
\[
   \mathcal D=\mathcal M=\mathsf{SP}_\alpha(A)^N
\]
leaves the manipulation formula \(\mu_C\) and Lemma~\ref{lem:witness} intact, but replaces the universal report quantifier by a single-peaked report quantifier. A direct enumeration over this restricted coordinate gives individual strategy-proofness checking in
\[
   O\big(n2^{(n+1)(m-1)}\big)
\]
time, and strong group strategy-proofness in
\[
   O\big(n2^{n(m-1)}(1+2^{m-1})^n\big).
\]
The corresponding universal-domain bounds are \(O(n(m!)^{n+1})\) and \(O(n(m!)^n(1+m!)^n)\). Thus the single-peaked projection is not only a social-choice assumption; it is a formal reduction of the report-coordinate search space and of the number of existential variables in the symbolic \(\langle C\rangle\)-step. Moulin's theorem on strategy-proofness and single-peakedness is represented by the same no-manipulation formula evaluated on this smaller coordinate product \cite{Moulin1980}.

Let \(\mathrm{peak}_\alpha(R_i)\) be the top alternative of a single-peaked order. For odd \(n\), define the median rule
\[
   f_{\mathrm{med}}(P)=
   \text{the median of }\{\mathrm{peak}_\alpha(P_i):i\in N\}
   \text{ along }\alpha .
\]

For a concrete audit example, take five agents on the axis \(a<b<c<d\) and sincere peaks
\[
   (a,b,b,c,d).
\]
The median outcome is \(b\). Inside the single-peaked report product, the median rule has no profitable singleton deviation for the voter whose true peak is \(d\): the available report changes are still generated by endpoint-deletion orders, and the rule remains the median of reported peaks.

Now keep all single-peaked rule values fixed but inspect an off-domain extension of the rule table. Suppose the extension adds an admissible report
\[
   q_5=d\succ b\succ c\succ a
\]
for agent 5 and stipulates, at the report profile \(Q=(R_1,R_2,R_3,R_4,q_5)\), that the outcome is \(c\). The on-domain row \(P=(R_1,\ldots,R_5)\) still has outcome \(b\). The audit trace is
\[
   W=(R,P,\{5\},Q,b,c).
\]
For the true single-peaked order of agent 5 with peak \(d\), the outcome \(c\) is preferred to \(b\). The same formula \(\mu_5\) is evaluated in both models; what changes is the existence of the labelled successor \((R,P)E_{\{5\}}(R,Q)\). The trace therefore localizes the failure to an added off-domain edge and an added rule-table row, not to a change in the welfare comparison or to the median rule on its intended domain.

The same stored trace also gives an update-safety test. If a public report restriction deletes some states but keeps \(P\) and \(Q\), the manipulation edge may survive while the surrounding fibre may stop satisfying exact union-composition. For two coalitions \(C,D\), the checker must still be able to factor every surviving \(E_{C\cup D}\)-edge through a surviving intermediate state. If the mixed report is missing, the updated transition system can be useful operationally, but it is no longer a \(\mathsf{Dev}(N)\)-frame. This is the concrete diagnostic role of the update-safety check.

\paragraph{Complete replay of the audit.}
The same tuple \(W=(R,P,\{5\},Q,b,c)\) is replayed against three model descriptions.
\begin{quote}
\small
\begin{itemize}[leftmargin=6mm]
  \item Input domains: \(\mathcal D=\mathcal M=\mathsf{SP}_\alpha(A)^5\) and \(\mathcal M'=\mathcal L(A)^5\).
  \item Restricted rule: \(f_{\mathrm{med}}:\mathcal M\to A\), with \(f_{\mathrm{med}}(P)=b\).
  \item Extension row: \(g|_{\mathcal M}=f_{\mathrm{med}}\) and \(g(Q)=c\), where \(Q\in\mathcal M'\setminus\mathcal M\).
  \item Restricted replay: \textsc{edge-deleted}; the report-domain field fails because \(Q\notin\mathcal M\).
  \item Boundary replay: \textsc{boundary-witness}; the new row \(Q\) is admissible, has outcome \(c\), and benefits agent 5.
  \item Deletion replay: \textsc{unsafe-update} if the surviving public image is not factor closed.
\end{itemize}
\end{quote}
This replay is small enough to be inspected by hand but large enough to show the algorithmic delta. For this example, each agent has \(2^{4-1}=8\) single-peaked reports and \(4!=24\) strict reports. Hence the intended single-peaked report space has \(8^5=32768\) rows, while the universal report space has \(24^5=7962624\) rows. The boundary-row theorem says that, after the on-domain median rule has been certified, a proposed extension need not recheck the old \(32768\) rows for new witnesses: any new singleton witness for agent 5 must use a row in \(\mathcal M'\setminus\mathcal M\) such as \(Q\). The public-deletion test is different again: it does not search for a new rule row, but for a missing mixed midpoint in the surviving fibre. Thus the example is a three-way audit transcript rather than merely a toy manipulation witness.

The last line can be made concrete in the same five-agent fibre. Let \(q_4=d\succ c\succ b\succ a\), let \(B=P[4:=q_4]\), let \(C=Q=P[5:=q_5]\), and let \(D=P[4:=q_4,5:=q_5]\). If a public deletion keeps the three corners \(P,B,C\) and deletes the mixed corner \(D\), then the endpoints \(B\) and \(C\) still survive and are connected by the grand-coalition relation on coordinates \(\{4,5\}\). However, the ordered factorization through \(E_{\{5\}}\circ E_{\{4\}}\) needs the deleted midpoint \(D\): from \(B\), changing only report 5 reaches \(D\), and from \(D\), changing only report 4 reaches \(C\). Thus the manipulation edge may remain visible while the public image is no longer a \(\mathsf{Dev}(N)\)-frame. This is exactly the factor-closure failure characterized by Proposition~\ref{prop:update-safety}.

The audit value is the field separation. The restricted model fails at admissibility, the off-domain extension succeeds at the boundary row, and the public deletion fails at the surrounding factorization field. The social-choice theorem used on the intended domain is still Moulin's theorem; the added information is where the same stored strategic event changes under representation changes. This is also where the audit layer differs from generic model checking. A model checker can verify a property on a supplied transition system, but it does not by itself say which rows must be rechecked when a rule is extended off domain or which missing midpoint invalidates a public subframe. The boundary-row theorem and factor-closure criterion give those two delta checks directly: new extension rows are the only possible source of new witnesses, and missing factor midpoints are the exact obstruction to preserving \(\mathsf{Dev}(N)\). In this sense the example is not offered as a new voting theorem; it is a small but complete audit scenario. It starts with a certified single-peaked rule, extends the representation by admitting off-axis reports, and then deletes a public corner. The output is not merely true or false: it identifies the changed field of the witness record and the algebraic reason why the updated graph is or is not a \(\mathsf{Dev}(N)\)-frame.

\noindent This is the practical difference from a bare modal rephrasing of strategy-proofness. A proposed extension of the report domain does not require re-auditing all old on-domain rows: Corollary~\ref{cor:boundary-audit} says that any new witness must use a boundary row. A public deletion does not require guessing which modal axiom failed: Proposition~\ref{prop:update-safety} says that the obstruction is precisely the missing factor midpoint. Thus the witness record functions as a diagnostic interface with operational output: \textsc{edge-deleted} means the proposed report is outside the restricted generator, \textsc{boundary-witness} means a new extension row changes the strategic status, and \textsc{unsafe-update} means the transition system may still be usable as an explicit graph but no longer carries the \(\mathsf{Dev}(N)\) algebra needed by the completeness theorem.

\begin{corollary}[Modal median certificate]\label{cor:median}
For odd \(n\), the median rule on \(\mathsf{SP}_\alpha(A)^N\) satisfies
\[
   \mathfrak M_{f_{\mathrm{med}}}^{\mathsf{SP},\mathsf{SP}},(R,R)
   \models
   \bigwedge_{i\in N}\neg\mu_i
\]
for every \(R\in\mathsf{SP}_\alpha(A)^N\). More generally, generalized median voter schemes satisfy the same formula on the corresponding single-peaked report coordinate.
\end{corollary}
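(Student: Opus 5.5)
The plan is to reduce the claim, via Proposition~\ref{prop:domain}(i) and Lemma~\ref{lem:witness}, to the classical median argument on peaks. Fix \(R\in\mathsf{SP}_\alpha(A)^N\) and an agent \(i\). By Lemma~\ref{lem:witness}, restricted to the coordinate product \(\mathsf{SP}_\alpha(A)^N\) as in Proposition~\ref{prop:domain}, \(\mu_i\) holds at \((R,R)\) exactly when some \(Q\equiv_{-\{i\}}R\) with \(Q_i\in\mathsf{SP}_\alpha(A)\) satisfies \(f_{\mathrm{med}}(Q)\succ_i^R f_{\mathrm{med}}(R)\). So it suffices to rule out such a \(Q\). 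This is the only modal step; everything afterwards is a statement about peaks.

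Write \(x=f_{\mathrm{med}}(R)\), \(p=\mathrm{peak}_\alpha(R_i)\) and \(q=\mathrm{peak}_\alpha(Q_i)\). Since \(f_{\mathrm{med}}\) depends only on reported peaks, only \(q\) matters. The argument splits into three cases.
\begin{enumerate}[label=(\alph*)]
  \item If \(p=x\), then agent \(i\) already receives the true top, and no outcome is strictly preferred.
  \item Suppose \(p\prec_\alpha x\). Since \(n\) is odd, the median position is \((n+1)/2\), and agent \(i\)'s peak lies weakly on the left side of the median. Changing the report to \(q\) either leaves the median at \(x\), or (when \(q\) moves to the right of \(x\)) moves it to some \(y\) with \(x\preceq_\alpha y\). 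This is a counting argument on how many reported peaks lie weakly left or weakly right of each point of the axis. In the latter case \(p\prec_\alpha x\prec_\alpha y\) holds whenever \(y\ne x\), and the single-peaked condition with \(a_j=p\) gives \(x\succ_i y\).
  \item The case \(x\prec_\alpha p\) is symmetric.
\end{enumerate}
The conclusion is that no \(y\) with \(y\succ_i^R x\) is reachable, so \(\neg\mu_i\) holds.

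The expected obstacle is the monotonicity step in case (b). One must state precisely that moving a single peak from one side of the median to the other can shift the median only in the direction of the move. I would argue this by counting: the median is the unique \(z\) with at least \((n+1)/2\) peaks weakly left of it and at least \((n+1)/2\) weakly right of it. Moving one peak from left to right can only decrease the left counts, so the new median cannot lie strictly left of \(x\).

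For the general clause on generalized median voter schemes, I would not reprove Moulin's characterization. Instead I would cite \cite{Moulin1980} for strategy-proofness of these schemes on the single-peaked domain. Because the report domain \(\mathsf{SP}_\alpha(A)^N\) is a coordinate product, Proposition~\ref{prop:domain}(i) turns that classical fact into validity of \(\bigwedge_i\neg\mu_i\) at every sincere state, with exactly the same translation as above.
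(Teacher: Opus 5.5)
Your proposal is correct, and its modal step is the same as the paper's. Proposition~\ref{prop:domain}(i) reduces \(\neg\mu_i\) at \((R,R)\) to the non-existence of \(Q_i\in\mathsf{SP}_\alpha(A)\) with \(f_{\mathrm{med}}(Q_i,R_{-i})\succ_i^R f_{\mathrm{med}}(R)\), because \(\mathrm{Imp}_{\{i\}}(y,x)\) collapses to \(p^i_{yx}\).

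The difference lies in the social-choice content. The paper reproves nothing: it cites Moulin for strategy-proofness of both the median and the generalized median schemes, in keeping with its stated policy that classical facts are translated rather than reproved. You instead give a self-contained counting proof for the plain median, and it is sound:
\begin{itemize}
\item With \(n\) odd, the median is the unique \(z\) having at least \((n+1)/2\) reported peaks weakly on each side.
\item If \(p\prec_\alpha x\) and \(q\preceq_\alpha x\), neither count at \(x\) drops, so the outcome stays \(x\).
\item If \(x\prec_\alpha q\), then for each \(z\prec_\alpha x\) the number of peaks weakly left of \(z\) can only fall, and it was already below \((n+1)/2\). So the new outcome \(y\) satisfies \(x\preceq_\alpha y\). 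When \(y\neq x\), single-peakedness on \(p\prec_\alpha x\prec_\alpha y\) gives \(x\succ_i^R y\).
\end{itemize}

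Your route makes the first sentence of the corollary independent of the citation and shows exactly where oddness of \(n\) is used. The paper's route is shorter and treats both sentences uniformly.

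Your counting argument also goes through unchanged for the phantom form of generalized medians. Agent \(i\) still moves one point of an odd multiset whose other points (the other reported peaks and the phantoms) are fixed. So for that form you would not need Moulin at all. In any case, only the easy strategy-proofness direction of Moulin's theorem is ever used, never the characterization.

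Finally, the appeal to \(\mathsf{SP}_\alpha(A)^N\) being a coordinate product is harmless but unnecessary. Proposition~\ref{prop:domain} holds for arbitrary \(\mathcal D\subseteq\mathcal M\). Product structure matters only for the \(\mathsf{Dev}(N)\) laws, not for evaluating \(\mu_i\).
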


\begin{proof}
The median and generalized median voter schemes are strategy-proof on the single-peaked domain by Moulin's characterization \cite{Moulin1980}. Proposition~\ref{prop:domain} translates that statement into the displayed modal formula.
\end{proof}

The report-coordinate restriction is essential. Theorem~\ref{thm:boundary-safety} gives the exact boundary-row test for whether a larger report domain introduces new witnesses. The simpler monotonicity consequence is the following.

\begin{proposition}[Monotonicity under report-domain expansion]\label{prop:report-expansion}
Let \(\mathcal D\subseteq\mathcal M\subseteq\mathcal M'\), and let \(g:\mathcal M'\to A\) extend \(f:\mathcal M\to A\). If \(f\) is manipulable on \((\mathcal D,\mathcal M)\), then \(g\) is manipulable on \((\mathcal D,\mathcal M')\). Equivalently, strategy-proofness on the larger report domain implies strategy-proofness on the smaller report domain, but not conversely.
\end{proposition}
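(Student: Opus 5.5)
The plan is to prove the forward implication by transporting a witness, and then give a small example for the failure of the converse. Suppose \(f\) is manipulable on \((\mathcal D,\mathcal M)\). By Proposition~\ref{prop:domain}, or directly by Lemma~\ref{lem:witness} read on the restricted model, there are an agent \(i\), a true profile \(R\in\mathcal D\), and a report profile \(Q\in\mathcal M\) with \(Q\equiv_{-\{i\}}R\). Putting \(x=f(R)\) and \(y=f(Q)\), the improvement \(y\succ_i^R x\) holds. This gives a witness record \((R,R,\{i\},Q,x,y)\).

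The key step is to check that this record is still valid in \(\mathfrak M_g^{\mathcal D,\mathcal M'}\). The state \((R,R)\) is a state of the larger model, because \(R\in\mathcal D\subseteq\mathcal M\subseteq\mathcal M'\). The edge \((R,R)E_{\{i\}}^{\mathcal D,\mathcal M'}(R,Q)\) is present, because \(Q\in\mathcal M\subseteq\mathcal M'\) and the outsider condition is unchanged. Since \(g\) extends \(f\), we have \(g(R)=f(R)=x\) and \(g(Q)=f(Q)=y\). The welfare atom \(p^i_{yx}\) is read from the true coordinate \(R\), which is the same in both models. Hence \(\mu_i\) holds at \((R,R)\) in the larger model, and \(g\) is manipulable. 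The same argument applies verbatim to coalitional \(\mu_C\) if group manipulability is intended. The ``equivalently'' clause is just the contrapositive.

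For the ``not conversely'' part I would use the paper's own five-agent single-peaked example. Take \(\mathcal D=\mathcal M=\mathsf{SP}_\alpha(A)^5\) and \(f=f_{\mathrm{med}}\), which is strategy-proof by Corollary~\ref{cor:median}. Take \(\mathcal M'=\mathcal L(A)^5\), and let \(g\) agree with \(f\) on \(\mathcal M\) and set \(g(Q)=c\) on the off-axis row \(Q\). By Theorem~\ref{thm:boundary-safety}, the record \((R,P,\{5\},Q,b,c)\) is a boundary-row witness. So \(g\) is manipulable on the larger domain while \(f\) is strategy-proof on the smaller one.

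I expect no real obstacle here. The only point needing care is the first step of the witness transport: the sincere state and the deviated report must both remain admissible, and this is exactly what the chain \(\mathcal D\subseteq\mathcal M\subseteq\mathcal M'\) guarantees.
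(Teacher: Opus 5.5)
Your proof is correct and matches the paper's: a witness with \(R\in\mathcal D\) and \(Q\in\mathcal M\), \(Q\equiv_{-C}R\), survives unchanged in \(\mathfrak M_g^{\mathcal D,\mathcal M'}\) because \(\mathcal M\subseteq\mathcal M'\) and \(g\) agrees with \(f\) on \(\mathcal M\), and the second formulation is the contrapositive. Your use of the five-agent median example for the ``not conversely'' clause, with \(g\) defined arbitrarily on the remaining rows of \(\mathcal M'\setminus\mathcal M\), is a useful addition, since the paper's proof invokes only the contrapositive and leaves the failure of the converse implicit in the earlier case study.
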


\begin{proof}
A manipulation witness in \(\mathcal M\) consists of \(R\in\mathcal D\), a coalition \(C\), and a report \(Q\in\mathcal M\) with \(Q\equiv_{-C}R\) satisfying the improvement condition. Since \(\mathcal M\subseteq\mathcal M'\) and \(g\) agrees with \(f\) on \(\mathcal M\), the same witness remains available in the larger model. The contrapositive gives the second statement.
\end{proof}

The corollary, the boundary-row theorem, and the expansion proposition separate two phenomena that are often conflated. Restricted-domain strategy-proofness is not a syntactic weakening of \(\mu_i\); it is a change in the topology, cardinality, and symbolic encoding of the admissible report coordinate. A larger report domain is safe exactly when its new boundary rows fail to form the witnesses characterized in Theorem~\ref{thm:boundary-safety}.

\section{Finite audit records and bounded relational checks}\label{sec:complexity}

This section states the finite audit contract used by the witness language. It is deliberately representation-relative. Once a finite rule table, generator, graph, or certificate is supplied, a checker evaluates formulas and audits labelled edges over that displayed object. This follows the explicit-graph viewpoint of finite-state model checking, where the transition system and the specification are both part of the verification instance \cite{ClarkeEmersonSistla1986,BlackburnDeRijkeVenema2001}. The social-choice side is similarly representation-sensitive: computational social choice treats winner determination, manipulation, and restricted domains as algorithmic problems only after the rule and domain representation has been fixed \cite{BartholdiToveyTrick1989,BrandtConitzerEndrissLangProcaccia2016}. No result below is a succinct-input complexity theorem or a satisfiability upper bound for \(\mathsf{Dev}(N)\); those are separate open problems.

\begin{definition}[Audit input modes]\label{def:implementation-modes}
An audit must declare how its finite object is represented. In listed-rule enumeration, a voting rule is given by its full truth table. If \(n=|N|\), \(m=|A|\), and \(L=m!\), then the input already contains \(L^n\) report rows, and the associated biprofile model has \(L^{2n}\) states because true and submitted profiles are both explicit. In restricted-domain generation, the admissible domain is submitted by a generator, such as the endpoint-deletion code for single-peaked orders, rather than by all \(L^n\) rows. The modal formula is unchanged; only the successor generator used for \(\langle C\rangle\) changes. In typed certificate checking, a pure modal formula \(\varphi\) and a finite certificate are supplied. The coalition set \(N\) is fixed in advance, so \(k=2^n\) is a fixed signature constant for the verifier. In explicit graph model checking, a finite labelled graph is supplied directly and the formula is evaluated over that graph. This is a model-checking problem, not a claim that small graphs always exist for satisfiable formulas.
\end{definition}

The fixed-\(N\) assumption is part of the certificate and explicit-graph modes: coalition labels are compiled into the signature before the input is read. If \(N\) varies with the input, then either \(2^n\) labels must be explicitly listed, or a new succinct label language must be introduced. The results below do neither.

\paragraph{Supplementary material for reproduction.}
The supplementary bundle is text based and deliberately small; \path{MANIFEST.md} and \path{checksums.sha256} list the submitted files. It has three independent parts.
\begin{itemize}[leftmargin=7mm]
  \item The executable certificate part consists of \path{certificate_schema.md}, \path{tools/certcheck.py}, three \path{.cert} examples, \path{certificate_expected_outputs.md}, and \path{certcheck_run.log}. The bundled log and the local reproduction log record an accepting one-state certificate, a rejected diamond row, and a rejected union-composition row with the expected exit statuses.
  \item The Lean part consists of \path{lean/DevNCore.lean}, \path{lean/README_Lean.md}, \path{lean_expected_outputs.md}, and \path{lean_run.log}. It uses only Lean~4's core environment and formalizes the two-agent coalition table, the relational midpoint consequence of exact composition, rectangular mixing, factor closure for public restrictions, and the two-state non-product example. The local Lean~4.29.1 check produced no diagnostics.
  \item The Alloy part consists of \path{devn_alloy_audit.als}, \path{devn_alloy_audit_notes.md}, \path{alloy_expected_outputs.md}, \path{tools/run_alloy_cli.sh}, and \path{alloy_cli_run.log}. It gives bounded searches for non-product components, safe and unsafe deletion, one-corner repair, and boundary-row creation of a new witness. The recorded Alloy~6.2.0 CLI run gives the expected SAT/UNSAT statuses in the stated scopes.
\end{itemize}
For a reviewer who wants a minimal reproduction pass, the supplied finite certificates are checked by one command:
\begin{quote}\small
\texttt{bash tools/run\_certcheck.sh}
\end{quote}
The submitted log \texttt{certcheck\_run.log} records the exact transcript: the accepting certificate exits with status 0, while the bad-diamond and bad-union certificates reject with status 1 for the advertised reasons. The Lean companion was locally checked with \texttt{lean lean/DevNCore.lean} under Lean~4.29.1 and produced no diagnostics; the observed transcript is recorded in \texttt{lean\_run.log}. The Alloy companion was locally executed with Alloy~6.2.0 and OpenJDK~21 using \texttt{tools/run\_alloy\_cli.sh}; the observed SAT/UNSAT statuses are recorded in \texttt{alloy\_cli\_run.log}.

The boundary is part of the claim. The executable checker validates supplied finite tables; the Lean file mechanizes finite relational lemmas; the Alloy model searches small relational scopes. The canonical completeness proof itself remains the mathematical proof in Section~\ref{sec:proof}: the supplement does not formalize maximal consistent sets, the Hilbert calculus, Lindenbaum extension, or the truth lemma. It also does not give a general satisfiability algorithm, a complete model checker for succinct rule descriptions, or an empirical benchmark suite. Its role is narrower and reproducible: it validates the displayed finite certificate format and checks the small relational update patterns used by the audit examples. The submitted ancillary directory records, for each artifact, its input schema, command, observed output, and checksum. The intended reading is therefore three-tiered: mathematical theorems are proved in the paper, finite relation lemmas are guarded by Lean and Alloy companions, and supplied certificates are checked by an executable script with recorded input and output.

\subsection{Witness replay pseudocode}

A practical audit separates three objects that are often conflated. The rule oracle computes outcomes. The domain generator decides which reports are admissible. The modal edge oracle turns admissible reports into labelled successors. A failed strategy-proofness check should therefore return more than a Boolean answer: it should return a typed witness and indicate whether the witness depends on a rule value, an added report, a deleted report, or a failure of exact union-composition after an update.

The recommended witness record is
\[
   W=(R,P,C,Q,x,y,\rho),
\]
where \(R\) is the true profile, \(P\) is the current report, \(Q\) is the deviated report, \(C\) is the changing coalition, \(x=f(P)\), \(y=f(Q)\), and \(\rho\) records the input mode. The shorter tuple \((R,P,C,Q,x,y)\) used above is the mathematical witness; \(\rho\) is implementation metadata.

\begin{quote}
\small
\noindent\textsc{ReplayWitness}\((W,U)\), with \(W=(R,P,C,Q,x,y)\)
\begin{enumerate}[leftmargin=*,label=\arabic*.]
  \item If \(R\) is not an admissible true profile under \(U\), return \textsc{welfare-outside-domain}.
  \item If \(P\) or \(Q\) is not an admissible report profile under \(U\), return \textsc{edge-deleted}.
  \item Compute \(x'=f_U(P)\) and \(y'=f_U(Q)\).
  \item If \(x'\ne x\) or \(y'\ne y\), return \textsc{rule-value-changed}.
  \item If \((R,P)E^U_C(R,Q)\) fails, return \textsc{successor-not-admitted}.
  \item If the improvement test at \(R\) fails, return \textsc{welfare-comparison-changed}.
  \item If the surviving fibre fails exact union-composition, return \textsc{unsafe-update}.
  \item Otherwise return \textsc{same-manipulation-witness}.
\end{enumerate}
\end{quote}

For the five-agent single-peaked example of Section~\ref{sec:single-peaked}, the same record is classified in three ways. The restricted single-peaked model deletes the edge because the off-axis report is not generated. The off-domain extension preserves the witness by adding the report and the corresponding rule row. A public deletion update can keep the endpoints but remove a mixed intermediate report, in which case the surviving graph is no longer a \(\mathsf{Dev}(N)\)-frame.

The public-update issue has a precise finite form. If \(F=(S,\{E_C\}_{C\subseteq N})\) is a \(\mathsf{Dev}(N)\)-frame and \(U\subseteq S\), write \(F\upharpoonright U\) for the induced public restriction with relations
\[
   E_C^U=E_C\cap(U\times U).
\]
The restriction automatically preserves equivalence of each label, empty identity, and inclusion. The only fragile row is exact union-composition.

\begin{proposition}[Update-safety criterion for public restrictions]\label{prop:update-safety}
Let \(F\in\mathsf{Dev}(N)\) and let \(U\subseteq S\). Then \(F\upharpoonright U\in\mathsf{Dev}(N)\) if and only if \(U\) is factor closed: for every \(s,t\in U\) and every \(C,D\subseteq N\),
\[
   sE_{C\cup D}t
   \quad\Longrightarrow\quad
   \exists u\in U\,(sE_Cu\text{ and }uE_Dt).
\]
\end{proposition}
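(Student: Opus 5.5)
The plan is to check the four clauses (D1)--(D4) for the restricted frame \(F\upharpoonright U\) one at a time. The first three hold for every \(U\). Only the reverse half of (D4) will turn out to be equivalent to factor closure.

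\emph{Clauses that hold automatically.} The restricted relation is \(E_C^U=E_C\cap(U\times U)\).
\begin{itemize}
  \item (D1): reflexivity, symmetry and transitivity of \(E_C\) survive intersection with \(U\times U\). For reflexivity, \(s\in U\) gives \((s,s)\in U\times U\). Symmetry and transitivity only relate elements already in \(U\).
  \item (D2): \(E_\emptyset^U=\mathrm{Id}_S\cap(U\times U)=\mathrm{Id}_U\).
  \item (D3): \(C\subseteq D\) gives \(E_C\subseteq E_D\), and intersecting both sides with \(U\times U\) preserves the inclusion.
  \item Forward half of (D4): if \(sE_C^Uu\) and \(uE_D^Ut\), then \(sE_{C\cup D}t\) holds in \(F\) by (D4) for \(F\). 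Since \(s,t\in U\), this gives \(sE_{C\cup D}^Ut\).
\end{itemize}
So \(E_C^U\circ E_D^U\subseteq E_{C\cup D}^U\) always holds. This is the relational counterpart of the forward composition step in Lemma~\ref{lem:canonical-factorization}.

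\emph{The equivalence.} It now suffices to show that factor closure is equivalent to the reverse inclusion \(E_{C\cup D}^U\subseteq E_C^U\circ E_D^U\) for all \(C,D\).
\begin{itemize}
  \item Factor closure implies the inclusion. Take \(sE_{C\cup D}^Ut\), so \(s,t\in U\) and \(sE_{C\cup D}t\). Factor closure supplies \(u\in U\) with \(sE_Cu\) and \(uE_Dt\). Because \(s,u,t\in U\), these edges are \(E_C^U\)- and \(E_D^U\)-edges, giving \(sE_C^U\circ E_D^Ut\).
  \item Conversely, assume \(F\upharpoonright U\in\mathsf{Dev}(N)\) and take \(s,t\in U\) with \(sE_{C\cup D}t\). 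Then \(sE^U_{C\cup D}t\), and (D4) for the restriction gives \(u\in U\) with \(sE_C^Uu\) and \(uE_D^Ut\). In particular \(sE_Cu\) and \(uE_Dt\), which is exactly the factor-closure condition.
\end{itemize}

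There is no real obstacle. The one point to handle carefully is that factor closure requires the midpoint \(u\) to lie in \(U\), not merely in \(S\). The midpoint supplied by exact composition in \(F\) (cf.\ Lemma~\ref{lem:rectangular-mixing}) may lie outside \(U\), and that is why the reverse inclusion is the only fragile clause. As a sanity check I would re-read the five-agent deleted-corner example: there the midpoint \(D\) is absent from \(U\), so \(B E_{\{4,5\}}^U C\) has no factorization through \(E^U_{\{5\}}\circ E^U_{\{4\}}\), and the criterion correctly reports failure.
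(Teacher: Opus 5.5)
Your proof is correct and follows the paper's own argument. Equivalence, empty identity, inclusion, and the forward composition inclusion are inherited by every restriction, and factor closure is precisely the reverse inclusion \(E_{C\cup D}^U\subseteq E_C^U\circ E_D^U\), with the midpoint required to lie in \(U\).
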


\begin{proof}
The forward direction is just exact composition in the restricted frame. If \(F\upharpoonright U\) is a \(\mathsf{Dev}(N)\)-frame and \(sE_{C\cup D}^Ut\), then exact composition in \(F\upharpoonright U\) gives a midpoint \(u\in U\) with \(sE_C^UuE_D^Ut\).

Conversely, assume factor closure. Restrictions of equivalence relations are equivalence relations on \(U\), \(E_\emptyset^U\) is identity on \(U\), and inclusion is inherited from \(F\). For exact composition, the inclusion \(E_C^U\circ E_D^U\subseteq E_{C\cup D}^U\) follows from exact composition in \(F\). For the reverse inclusion, take \(sE_{C\cup D}^Ut\). Factor closure supplies \(u\in U\) with \(sE_CuE_Dt\), hence \(sE_C^UuE_D^Ut\). Thus \(E_C^U\circ E_D^U=E_{C\cup D}^U\) for all \(C,D\).
\end{proof}

This proposition is the formal audit value of the missing-corner examples. A deletion may preserve the displayed manipulation edge \((R,P)E_C(R,Q)\) and the two rule values, but fail to preserve the surrounding factor closure. The resulting structure is still a labelled transition system, but it should not be treated as a public subframe of \(\mathsf{Dev}(N)\).

\subsection{Typed finite certificates}

A typed certificate for a pure coalition-modal formula \(\varphi\) consists of a finite set \(T\) of states, an initial state \(t_0\), a finite closure \(\Delta\) of formulas containing \(\varphi\), final relation tables \(R_C\subseteq T^2\) for every coalition label, a written truth type \(\lambda(t)\subseteq\Delta\) for each state, and optional witness pointers for positive diamonds and union factors. The relation tables are submitted as final tables to be checked; the verifier does not construct them from a quotient or from a canonical model.

The plain input schema used in the supplementary examples is the following record:
\begin{quote}
\small
\begin{verbatim}
states:       finite list of state names
root:         one state name
labels:       fixed coalition labels, including empty and grand labels
formula:      root formula to be checked
closure:      finite formula closure Delta
types:        for each state, the formulas from Delta claimed true there
relations:    for each label C, the listed pairs of R_C
diamonds:     optional witnesses for positive diamond rows
factors:      optional midpoints for positive union-composition rows
\end{verbatim}
\end{quote}

For example, the one-state accepting certificate for \(p\) has state \(s_0\), root \(s_0\), type \(\lambda(s_0)=\{p,[\emptyset]p,\langle\emptyset\rangle p\}\) for the displayed closure, and every relation equal to \(\{(s_0,s_0)\}\). A failed-diamond example claims \(\langle\{1\}\rangle p\) at \(s_0\) while every \(R_{\{1\}}\)-successor omits \(p\). A failed-union example uses two states, singleton relations equal to identity, and a universal grand-coalition relation. The supplement includes these three files and a small line-based reference checker, \texttt{tools/certcheck.py}; its transcript accepts the first certificate and rejects the other two at the diamond and union-composition rows, respectively.

The exact reproduction command is intentionally short:
\begin{quote}
\small
\begin{verbatim}
bash tools/run_certcheck.sh
\end{verbatim}
\end{quote}
The bundled \texttt{certcheck\_run.log} records the corresponding outputs and exit statuses: \texttt{ACCEPT} for the positive example, rejection by modal truth mismatch for the bad-diamond example, and rejection by failed union-composition for the bad-union example. The checker validates only supplied finite tables; it is not a satisfiability solver.

\begin{quote}
\small
\noindent\textsc{VerifyTypedCertificate}\((\mathcal T,\varphi)\)
\begin{enumerate}[leftmargin=*,label=\arabic*.]
  \item Check that each type is a complete Boolean type over \(\Delta\).
  \item Check that every \(R_C\) is an equivalence relation.
  \item Check \(R_\emptyset=\mathrm{Id}\) and \(R_C\subseteq R_D\) whenever \(C\subseteq D\).
  \item For every \(C,D,t,v\), check that \(tR_{C\cup D}v\) holds exactly when some \(u\) satisfies \(tR_CuR_Dv\).
  \item For every \([C]\psi\in\Delta\) and state \(t\), check the displayed box truth row against all \(R_C\)-successors of \(t\).
  \item For every \(\langle C\rangle\psi\in\Delta\) and state \(t\), check the displayed diamond truth row against all \(R_C\)-successors of \(t\).
  \item For each positive diamond, check that the displayed witness pointer names such a successor.
  \item Accept iff all rows pass and \(\varphi\in\lambda(t_0)\).
\end{enumerate}
\end{quote}

The certificate is an exchange format for auditing, not a proof search method. A tool may first find a manipulation witness in a concrete social-choice model, then export only the finite neighbourhood needed to check the modal claim. Another tool can verify the exported relation tables without trusting the original rule program. This is the practical role of the certificate: it decouples witness generation from witness validation.

\begin{proposition}[Polynomial verification of typed certificates]
\label{prop:certificate-verifier}
For fixed \(N\), there is a deterministic polynomial-time verifier, measured in the written certificate size, that decides whether a submitted typed finite certificate is a genuine finite \(\mathsf{Dev}(N)\)-model satisfying \(\varphi\) at \(t_0\).
\end{proposition}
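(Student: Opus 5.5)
The plan is to show that the eight steps of \textsc{VerifyTypedCertificate} are each polynomial in the written size \(|\mathcal T|\) of the certificate, and that together they are both sound and complete for the property ``\(\mathcal T\) is a finite \(\mathsf{Dev}(N)\)-model satisfying \(\varphi\) at \(t_0\)''. Fix \(N\), so the number of labels \(k=2^n\) is a constant and the number of label pairs \((C,D)\) is \(k^2=4^n\), also a constant. Write \(|T|\) for the number of states and \(|\Delta|\) for the closure size; both are bounded by the certificate size, and each relation table has at most \(|T|^2\) pairs.

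\textbf{Frame part.} Steps 2--4 scan the relation tables. Checking reflexivity, symmetry and transitivity of each \(R_C\) costs \(O(|T|^3)\) per label. The identity check and the inclusion checks \(R_C\subseteq R_D\) cost \(O(k^2|T|^2)\). Step 4 checks exact composition: for each of the \(k^2\) pairs \((C,D)\) and each pair \((t,v)\), it searches for a midpoint \(u\), for \(O(k^2|T|^3)\) in total. These are exactly clauses (D1)--(D4), so after steps 2--4 the frame \((T,\{R_C\})\) is accepted iff it lies in \(\mathsf{Dev}(N)\). No completeness theorem is needed here, because the verifier checks the defining equations directly on the final tables.

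\textbf{Truth part.} Step 1 ensures each \(\lambda(t)\) is a consistent complete Boolean type over \(\Delta\): for every \(\neg\psi\in\Delta\), exactly one of \(\psi,\neg\psi\) lies in the type, and conjunctions are handled similarly. This costs \(O(|T||\Delta|)\) with fixed-time membership lookups. Steps 5--6 check every modal row against all successors, for \(O(|T|^2|\Delta|)\) in total.

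The key correctness claim, and the step I expect to need the most care, is a truth lemma for certificates: if all rows pass, then for every \(t\in T\) and \(\psi\in\Delta\), \(\psi\in\lambda(t)\) iff \(\mathcal T,t\models\psi\), where the valuation is read off from the atoms in the types. I would prove this by induction on \(\psi\). Atoms hold by definition; the Boolean cases come from step 1; the modal cases come from steps 5--6 together with the induction hypothesis applied to the subformula \(\psi\). This requires \(\Delta\) to be subformula-closed, which step 1 should also verify in polynomial time. Given the truth lemma, step 8 accepts iff \(\mathcal T,t_0\models\varphi\).

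Conversely, every genuine finite \(\mathsf{Dev}(N)\)-model equipped with its true types and correct witness pointers passes every row, so the verifier is also complete. Step 7 is redundant for correctness, since step 6 already scans all successors, but it costs only linear time to check. Summing the steps gives \(O(k^2|T|^3+|T|^2|\Delta|)\), which is polynomial in the certificate size because \(k\) is a constant.
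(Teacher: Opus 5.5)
Your proposal is correct and follows essentially the same route as the paper: check the frame rows (D1)--(D4), including the exact union-composition row, directly on the final tables; check the Boolean and modal truth rows by scanning all successors; and obtain the certificate truth lemma by induction over \(\Delta\), with the cost dominated by the \(O(k^2|T|^3)\) composition scan for fixed \(k=2^{|N|}\). You also require \(\Delta\) to be subformula-closed and note that a genuine model with correct types passes every row; these are small refinements that the paper's proof leaves implicit.
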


\begin{proof}
The verifier reads the final tables and checks the following finite rows. Boolean type rows check that each \(\lambda(t)\) is coherent over \(\Delta\). Frame rows check that each \(R_C\) is an equivalence relation, that \(R_\emptyset=\{(t,t):t\in T\}\), and that \(C\subseteq D\) implies \(R_C\subseteq R_D\). Union rows check, for every \(t,v,C,D\), that
\[
   tR_{C\cup D}v
   \quad\Longleftrightarrow\quad
   \exists u\in T\,(tR_Cu\text{ and }uR_Dv).
\]
For the left-to-right direction a supplied factor pointer may name a displayed midpoint; for the right-to-left direction the verifier scans all \(u\in T\). Box rows and diamond rows are then checked as truth equivalences against the final relation tables. Positive diamonds may carry witness pointers, but negative diamonds are not trusted as missing-pointer claims: they are checked by scanning all successors. The root row checks \(\varphi\in\lambda(t_0)\).

If all rows pass, the frame rows and union rows say exactly that \((T,\{R_C\})\in\mathsf{Dev}(N)\). The Boolean rows handle propositional connectives, and the modal rows are equivalence checks. An induction on formulas in \(\Delta\) gives
\[
   \mathcal T,t\models\chi
   \quad\text{iff}\quad
   \chi\in\lambda(t)
\]
for every retained formula \(\chi\) and every \(t\in T\). In particular, \(\mathcal T,t_0\models\varphi\).

With \(p=|T|\), \(h=|\Delta|\), and fixed \(k=2^{|N|}\), the union row costs \(O(k^2p^3)\). The modal truth rows cost \(O(kp^2h)\), because each box or diamond formula is checked by scanning the relevant successor row. The remaining type, duality, pointer, and root checks are polynomial in \(p,h\). Thus the submitted finite certificate is efficiently checkable in its written size. This proves only a verification statement: it assumes that the finite tables have been supplied and does not bound the size of certificates needed for arbitrary satisfiable formulas.
\end{proof}

\subsection{Lean proof companion}

The supplementary file \texttt{lean/DevNCore.lean} is a small Lean~4 companion for the finite relational core \cite{MouraUllrich2021Lean4}. Its purpose is deliberately narrower than the canonical proof of Theorem~\ref{thm:completeness}. It does not formalize maximal consistent sets, the Hilbert calculus, Lindenbaum extension, or the canonical truth lemma. Instead, it mechanizes the algebraic obligations that are easiest to mistranscribe in the finite audit layer.

The file fixes the two-agent coalition table with labels \(\emptyset,\{1\},\{2\},\{1,2\}\), defines a \(\mathsf{Dev}\)-frame as labelled relations satisfying equivalence, empty identity, and exact composition, and proves the following relational facts.
\begin{quote}
\small
\begin{verbatim}
midpoint
rectangular_mixing
restricted_composition_iff_factor_closed
nonProductFrame_not_coordinate_separated
\end{verbatim}
\end{quote}
The first theorem is the relational form of the reverse-composition midpoint: from a displayed \(R_{C\cup D}\)-edge one obtains a midpoint for an \(R_C\)-edge followed by an \(R_D\)-edge. The second specializes this to the mixed-corner property. The third proves the composition row for a public restriction under factor closure. The fourth mechanizes the two-state abstract component whose non-empty labels are universal and whose coordinate separation fails.

A reader with Lean~4 can check the companion with
\begin{quote}
\small
\begin{verbatim}
lean lean/DevNCore.lean
\end{verbatim}
\end{quote}
A successful run produces no output. The submitted \texttt{lean\_run.log} records a local Lean~4.29.1 check with no \texttt{sorry}/\texttt{admit} matches and no diagnostics from the Lean command. The Lean file is therefore a proof companion for the finite relational core, while the canonical completeness theorem remains the mathematical proof in Section~\ref{sec:proof}. This split is intentional: it gives a machine-checked guardrail for the audit algebra without turning the paper into a full proof-assistant formalization project.

\subsection{Alloy companion checks}

The supplementary file \texttt{devn\_alloy\_audit.als} is a bounded relational audit model for the finite frame and public-update side conditions \cite{Jackson2012Alloy}. It fixes two agents and represents the four coalition labels \(\emptyset,\{1\},\{2\},\{1,2\}\). The model encodes the \(\mathsf{Dev}(N)\) laws, coordinate separation, rectangular mixing, public deletion, and the factor-closure condition from Proposition~\ref{prop:update-safety}. It is used for bounded counterexample search and example generation: it checks that the intended preservation assertions have no small counterexample in the chosen scopes, and it runs the product, non-product, unsafe deletion, surviving-witness, and one-corner repair scenarios used in the text.

A reader can reproduce the intended checks in the Alloy Analyzer with the following commands.

\begin{quote}
\small
\begin{verbatim}
check DevImpliesRectangularMixing for 6 State, 4 Report
check FactorClosureCharacterizesSafeDeletion for 6 State, 4 Report
run MissingCornerInsideDev for 6 State, 4 Report
run NonProductDevComponent for exactly 2 State, 4 Report
run SafeLineDeletion for exactly 4 State, 4 Report
run UnsafePublicDeletion for exactly 4 State, 4 Report
run UpdateBreaksExactComposition for exactly 4 State, 4 Report
run WitnessSurvivesButUpdateUnsafe for exactly 4 State, 4 Report
run RepairByAddingOneCorner for exactly 4 State, 4 Report
check NewWitnessIsOldOrBoundary for 0 State, exactly 4 Report
run BoundaryRowCreatesNewWitness for 0 State, exactly 4 Report
\end{verbatim}
\end{quote}

The observed Alloy~6.2.0 CLI run in \texttt{alloy\_cli\_run.log} returns \texttt{UNSAT} for the two preservation checks and for \texttt{MissingCornerInsideDev}, \texttt{SAT} for the non-product, safe-deletion, unsafe-deletion, failed-composition, surviving-witness, repair, and boundary-row instance searches, and \texttt{UNSAT} for \texttt{NewWitnessIsOldOrBoundary}. Thus exact union-composition has no bounded counterexample to rectangular mixing in the selected scopes; factor closure has no bounded counterexample to public-update safety; missing corners are not found inside a genuine \(\mathsf{Dev}(N)\)-frame; and the expected finite patterns are generated for non-product components, unsafe public deletion, one-corner repair, and boundary-row witness creation. These bounded checks do more than test syntax: they generate the small update and boundary-row patterns discussed in the social-choice audit layer while leaving the general completeness proof to Section~\ref{sec:proof}.

\subsection{Enumeration baselines and complexity boundary}

\begin{theorem}[Naive enumeration bounds for listed rules]
\label{thm:enumeration-bounds}
Assume that \(f\) is given by a truth table and that preference comparisons are precomputed.
\begin{enumerate}[label=(\roman*)]
  \item Model checking a formula \(\varphi\) at all states by direct enumeration takes time
  \[
     O\bigl(|\varphi|L^{2n+d(\varphi)}\bigr),
  \]
  where \(d(\varphi)=\max\{|C|:\langle C\rangle\psi\text{ occurs in }\varphi\}\).
  \item Strategy-proofness can be checked by direct enumeration in time \(O(nL^{n+1})\).
  \item Strong group strategy-proofness can be checked by direct enumeration in time
  \[
     O\bigl(nL^n(1+L)^n\bigr).
  \]
\end{enumerate}
\end{theorem}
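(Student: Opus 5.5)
The plan is to prove the three bounds by direct counting over the listed biprofile model \(\mathfrak M_f\). The state space \(S_f\) has \(L^{2n}\) states. With \(f\) given as a table and preference atoms precomputed, each atom \(o_x\), \(p^i_{xy}\), \(t^i_x\) is evaluated in constant time per state.

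For (i), I would label states by subformulas bottom-up, the standard model-checking scheme. Boolean subformulas cost \(O(1)\) per state, so \(O(L^{2n})\) per subformula. For a subformula \(\langle C\rangle\psi\), the successors of \((R,P)\) are exactly the states \((R,Q)\) with \(Q\equiv_{-C}P\). These are generated by enumerating \(q_C\in\mathcal L(A)^C\), giving \(L^{|C|}\le L^{d(\varphi)}\) successors. The truth value is then read from the already computed label of \(\psi\). Summing over at most \(|\varphi|\) subformulas and \(L^{2n}\) states gives \(O(|\varphi|L^{2n+d(\varphi)})\). The one point to make explicit is that successor generation is a coordinate overwrite. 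With a suitable indexing of profiles, say a mixed-radix code, each successor index is computed in \(O(1)\) amortized time while iterating over \(q_C\). This avoids scanning all \(L^{2n}\) states.

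For (ii) and (iii), I would not check the modal formula at all states. Instead I would use Theorem~\ref{thm:exact}, which reduces the properties to sincere states \((R,R)\), of which there are \(L^n\).

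For (ii), for each sincere \(R\), each \(i\in N\), and each \(Q_i\in\mathcal L(A)\), the check looks up \(f(Q_i,R_{-i})\) and compares it with \(f(R)\) under \(R_i\). This is \(n L\) constant-time checks per \(R\), for a total of \(O(nL^{n+1})\).

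For (iii), a coalition together with its joint report is encoded as a vector in \((\{\ast\}\cup\mathcal L(A))^N\), where \(\ast\) means "keep the sincere report". There are \((1+L)^n\) such vectors, and together they cover all pairs \((C,Q_C)\) with \(C\) non-empty plus the trivial vector. For each sincere \(R\) and each vector, the check forms \(Q\), looks up \(f(Q)\), and tests \(\mathrm{Imp}_C(f(Q),f(R))\) in \(O(n)\). This gives \(O(nL^n(1+L)^n)\). Correctness of both searches is Lemma~\ref{lem:witness}.

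I expect the main obstacle to be only bookkeeping rather than mathematics. The \(O(1)\) lookup assumption needs to be stated so that forming \(Q\) and indexing the truth table does not add a hidden factor of \(n\). In (iii) this can be absorbed by generating the vectors in Gray-code or odometer order and updating the index incrementally. In (ii) it is absorbed because only one coordinate changes.
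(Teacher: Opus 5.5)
Your proposal is correct and follows the paper's proof. Part (i) uses bottom-up labelling with \(L^{|C|}\) report replacements per state. Parts (ii) and (iii) enumerate sincere profiles, agents or coalitions, and deviating reports. Your encoding in \((\{\ast\}\cup\mathcal L(A))^N\) is the bijective form of the paper's summation step \(\sum_{C\subseteq N}L^{|C|}=(1+L)^n\). Your indexing remarks make explicit the constant-time table lookup that the paper leaves implicit.
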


\begin{proof}
For (i), evaluate subformula truth sets bottom-up. Atomic and Boolean clauses are linear in the state space. For a modal clause \(\langle C\rangle\psi\), each state requires enumeration of at most \(L^{|C|}\) report replacements, so the worst modal factor is \(L^{d(\varphi)}\). For (ii), enumerate each true profile, each agent, and each alternative report for that agent. For (iii), enumerate each true profile, each non-empty coalition, and each joint report for that coalition; summing over coalitions gives the displayed bound.
\end{proof}

These are deliberately naive bounds for listed rules. They are useful as baselines, not hardness results and not upper bounds for succinctly represented voting rules. For a restricted-domain generator, replace \(L\) in the successor loop by the number of reports generated for one agent. For the single-peaked endpoint generator of Section~\ref{sec:single-peaked}, the branching factor is \(2^{m-1}\) rather than \(m!\). This is the computational counterpart of the modal point made earlier: the formula is unchanged, but the successor generator changes.

\begin{remark}[Complexity status]
\label{rem:complexity-status}
Propositional satisfiability embeds into the pure modal language, so \(\mathsf{Dev}(N)\)-satisfiability is NP-hard even for fixed \(N\). For ordinary S5, Ladner's theorem gives NP-completeness \cite{Ladner1977}. The present abstract frame class adds \(R_\emptyset=\mathrm{Id}\), monotone inclusion, and exact union-composition, so a matching upper bound or small-model analysis would require a separate filtration argument respecting the coalition table. Proposition~\ref{prop:certificate-verifier} supplies the verification component needed when finite \(\mathsf{Dev}(N)\) certificates are explicitly exchanged.
\end{remark}

For reproducibility, an audit report should state the rule representation, the true-profile and report-profile domains \((D,M)\), the domain generator if one is used, whether coalition labels are fixed in the signature or encoded as input data, whether relations are generated on demand or submitted as explicit final tables, the returned witness \(W\) or finite certificate \(\mathcal T\), and whether the result is a listed-rule enumeration result, a generator-based audit, an explicit-graph model-checking result, or a pure modal certificate-verification result. These fields make the section practically usable while preserving the theoretical boundary: the paper proves completeness for the abstract frame class, correctness of the social-choice translations, and polynomial checking of supplied finite certificates; it leaves the tight satisfiability complexity of \(\mathsf{Dev}(N)\) open.

\section{Conclusion}

Biprofile deviation logic treats strategic deviation as a labelled report-replacement relation while keeping welfare comparisons on a fixed true profile. The main technical result is the completeness of \(H_{\mathrm{bp}}\) for \(\mathsf{Dev}(N)\), whose S5 labels interact by empty-coalition identity, inclusion, and exact union-composition. The proof displays the canonical factorization step needed for that finite semilattice table and connects it to a reusable report-deviation semantics in which proof theory, coordinate recovery, and strategic-audit records share the same labelled relations.

The representation theorem marks the boundary between abstract deviation frames and concrete report products. Exact union-composition already gives finite rectangular mixing inside \(\mathsf{Dev}(N)\); what the abstract equations do not force is coordinate separation. This lets the proof theory work at the level of labelled report-deviation algebra while still identifying the extra condition needed to recover report coordinates.

The social-choice contribution is the audit layer built on that semantics. Classical strategy-proofness, Gibbard--Satterthwaite, and single-peaked median facts are translated into the true/report language; the paper-specific tests then ask what happens when the representation changes. Theorem~\ref{thm:boundary-safety} and Corollary~\ref{cor:boundary-audit} turn off-domain extension checking into a boundary-row scan, while Proposition~\ref{prop:update-safety} identifies factor closure as the exact public-update safety condition. The five-agent median case study shows the intended use: the same stored strategic event is classified as deleted by the restricted generator, created by a boundary row of an off-domain extension, and made algebraically unsafe by a missing public-update midpoint. These results do not strengthen the classical impossibility theorems, but they tell a verifier which witness field changed when a domain is extended or a public submodel is deleted. The finite-audit section connects the tests to reproducible finite objects: replay pseudocode identifies the changed witness field, the bundled checker validates the supplied certificate examples, the Lean companion proves the small relational midpoint and factor-closure lemmas, and the Alloy companion searches the safe and unsafe finite patterns used in the update-safety and boundary-row discussions.

\section*{Author Contributions}
Conceptualization, F.A. and B.B.; methodology, F.A.; formal analysis, F.A.; writing--original draft preparation, F.A.; writing--review and editing, F.A. and B.B.; supervision, B.B. All authors have read and agreed to the published version of the manuscript.

\section*{Funding}
This research received no external funding.

\section*{Data Availability Statement}
No empirical data were created or analyzed in this study. The supplementary material contains the LaTeX sources; \texttt{devn\_alloy\_audit.als} with bounded Alloy outcomes and the recorded \texttt{alloy\_cli\_run.log}; \texttt{lean/DevNCore.lean} with the recorded \texttt{lean\_run.log}; the supplied-certificate schema; \texttt{tools/certcheck.py}; accepting and rejecting \texttt{.cert} examples; \texttt{certcheck\_run.log}; and checksums for the submitted files.

\section*{Institutional Review Board Statement}
Not applicable.

\section*{Informed Consent Statement}
Not applicable.

\section*{Conflicts of Interest}
The authors declare no conflicts of interest.